\newtheorem{lemma}{Lemma}
\newtheorem{theorem}{Theorem}
\newtheorem{corollary}{Corollary}
\newtheorem{assumption}{Assumption}
\begin{document}
\title{Downlink Coordinated Multi-Point with Overhead Modeling in Heterogeneous Cellular Networks}
\author{
Ping Xia, Chun-Hung Liu, and Jeffrey G. Andrews\\
\thanks{P. Xia and J. G. Andrews are with the Wireless Networking and Communications Group in the Department of Electrical and Computer Engineering at The
University of Texas at Austin (email: pxia@mail.utexas.edu and jandrews@ece.utexas.edu). C.-H. Liu is with the Department of Engineering Science and Electrical Engineering at National Cheng Kung University, Tainan, Taiwan (email: chungliu@mail.ncku.edu.tw). Dr. Liu is the contact author. This research was supported by Nokia Siemens Networks and the National Science Foundation CIF-1016649. Manuscript last modified: \today}
}
\maketitle

\begin{abstract}
Coordinated multi-point (CoMP) communication is attractive for heterogeneous cellular networks
(HCNs) for interference reduction. However, previous approaches to CoMP face two major hurdles in HCNs. First, they usually ignore the inter-cell overhead messaging delay, although it results in an irreducible performance bound. Second, they consider the grid or Wyner model for base station locations, which is not appropriate for HCN BS locations which are numerous and haphazard. Even
for conventional macrocell networks without overlaid small cells, SINR results are not tractable in the grid model
nor accurate in the Wyner model. To overcome these hurdles, we develop a novel analytical framework which includes the impact of overhead delay for CoMP evaluation in HCNs. This framework can be used for a class of CoMP schemes without user data sharing. As an example, we apply it to downlink CoMP zero-forcing beamforming (ZFBF), and see significant divergence from previous work. For example, we show that CoMP ZFBF does not increase throughput when the overhead channel delay is larger than $60\%$ of the channel coherence time. We also
find that, in most cases, coordinating with only one other cell is nearly optimum for downlink CoMP ZFBF.

\end{abstract}

\section{Introduction}
Improving cellular network capacity is a serious concern of network operators, as new generation mobile devices (e.g. smart phones and mobile connected tablets/laptops) keep getting wider adoption and generating crushing data demand\cite{CISCO12}. One approach among many viable solutions is coordinated multi-point communication, where multiple cells cooperate to improve the key quality-of-service metrics including network throughput (see \cite{Gesbert10CoMPSummary} for an overview). As the cellular network is now in a major transition from conventional macrocell network to heterogeneous cellular network with additional tiers of small cells (e.g. microcells, picocells, femtocells and distributed antennas), it is highly desirable to study CoMP techniques in this new paradigm.

\subsection{CoMP Study Hurdles in Heterogeneous Cellular Networks}
Existing research studies the CoMP concept in conventional macrocell-only networks\cite{Gesbert10CoMPSummary,CadJaf08,SomSha00,ShaZai01,HuaVen04,JinTseSor08}. However, two important aspects of existing works prevent their direct application to the new environment of HCNs.

The first one is the idealized assumption on overhead messaging among neighboring cells, i.e. assuming no inter-cell overhead delay. Not surprisingly, it results in promising predictions on CoMP performance such as multi-fold throughput gain\cite{Gesbert10CoMPSummary,CadJaf08,JinTseSor08,SomSha00,ShaZai01,HuaVen04,Fos06gridmodel}. However, inter-cell overhead delay is not trivial in typical cellular network \cite{Qcomm10CTW,Irmeretal11CoMPTrial,Qcom12ITA,BackhaulImpact,HuaWei09}, which leads to an irreducible performance bound in theory \cite{AliTse10} and significant performance degradations in practice \cite{Qcomm10CTW,Irmeretal11CoMPTrial,Qcom12ITA}. The importance of inter-cell overhead delay is further confirmed by latest industrial implementations, where the performance degradations are much smaller when inter-cell overhead channel is particularly optimized (e.g. directly connecting base stations with gigabyte Ethernet)\cite{Qcom12ITA,Irmeretal11CoMPTrial}. Clearly, inter-cell overhead delay is an important performance limiting factor, and must be modelled and quantified in CoMP study\cite{LozHeaAnd12}. However, this is not a trivial task in HCNs where different types of base stations (BSs) have very different backhaul capabilities and protocols\cite{Qcom12ITA,BackhaulImpact,HuaWei09,And12JSAC}.

The second one is the SINR characterization. Previous works use the grid model or the Wyner model of base station locations to characterize the end-user's signal and other-cell interference in CoMP schemes. Unfortunately, neither model is suitable for HCNs because they assume base stations (BSs) are located on regular positions (e.g. locations form a hexagon\cite{ZhangICIC,SomSha00}, a line\cite{WynerModel} or a circle\cite{JinTseSor08}) while small cells in HCNs have unplanned ad hoc locations. Besides, SINR characterization under these two models are inaccurate or intractable, or both. The Wyner model allows clean-form SINR results, to understand CoMP concept from information-theoretic perspective. However, this model is not accurate due to unrealistic assumptions on the wireless channel and inter-cell interference \cite{Xu11Wyner}. On the other hand, the grid model of conventional macrocell networks is known to be intractable for SINR analysis\cite{Fos06gridmodel}. In HCNs with additional tiers of overlaid small cells, the grid model is  more complex and tractable SINR results are almost hopeless\cite{DhiGanBacAnd11}, not to mention a grid model for small cells is unlikely to be very realistic. In sum, previous models are incapable to capture the new characteristics of BS locations in HCNs, and new models are highly desired for accurate and tractable SINR analysis. 

\subsection{Previous Work}
Early theoretical literature completely ignores the impact of overhead messaging in CoMP schemes, i.e. they assume that overhead messages have no quantization error and the overhead channel is delay-free with infinitely large capacity\cite{SomSha00,Gesbert10CoMPSummary,ShaZai01,HuaVen04,CadJaf08}. Such an ideal assumption is useful for the understanding of CoMP fundamentals, but is obviously far from reality in most practical cases. As a result, it causes highly over-optimistic predictions on the performance of CoMP schemes\cite{Qcomm10CTW,Irmeretal11CoMPTrial,Qcom12ITA}.

Practical issues of overhead messaging are considered in a few more recent works. The capacity limit of inter-cell overhead channel is considered especially in CoMP joint processing where user data is shared among cells\cite{SanSomPooSha09,MarFet07}. The limited feedback model is widely used to characterize the quantization inaccuracy in overhead messages\cite{Loveetal08,JindalRVQ,ZhangICIC}. However, the impact of overhead delay is either ignored or considered under a very simplified model -- that is, a fixed delay model\cite{Ako10Correlated}. In general, appropriated modeling of overhead delay is still missing, to capture the imperfections of overhead channel such as congestion and hardware delays. In our previous work \cite{XiaJoAnd11}, we proposed various models on overhead channels in HCNs (e.g. backhaul and over-the-air overhead channels) and derived the respective delay distributions. These results will be used in this paper to quantify the impact of overhead delay in CoMP performance. 

Because the grid model and the Wyner model are obviously not suitable for SINR characterization in HCNs, several recent works focus on developing new models for BS locations\cite{AndBacGan10,DhiGanBacAnd11}. These works model the locations of BSs in HCNs as nodes in one or more spatial Poisson Point Processes (PPPs). Base station transceiver parameters (e.g. transmit power and path-loss exponent) become the mark of the respective node in the PPP. In this way, the PPP model characterizes the BS location randomness as well as the heterogeneity among different types of BSs. Previous studies on this PPP model show that, besides providing analytical tractability, the new model is at least as accurate as the hexagon-grid model in characterizing the SINR distribution\cite{AndBacGan10,DhiGanBacAnd11,blaszczyszyn2012using,taylor2012pairwise}. Therefore, some in industry have begun to use it for SINR characterization in HCNs\cite{Muk12}. 

\subsection{Contributions}
This paper evaluates downlink CoMP in HCNs, using appropriate models of inter-cell overhead delay and BS locations. We first develop a new analytical framework for the evaluation of a class of CoMP schemes without user data sharing among cells. This framework quantifies the impact of inter-cell overhead delay in HCNs by using our previous results on the delay distributions under various overhead channel configurations\cite{XiaJoAnd11}. Note that this framework includes previous CoMP analysis without overhead delay modeling as a special case. Therefore it can be used to explain the performance gaps between previous analytical predictions and real implementations of CoMP. 

To concretely illustrate the usage of this framework, we apple it to a specific scheme: downlink CoMP zero-forcing beamforming (ZFBF). CoMP ZFBF has been studied in macrocell networks before \cite{JinTseSor08,Dah10beamforming,ZhangICIC} and is attracting industrial implementation efforts\cite{Qcom12ITA}. We derive upper and lower bounds on the end-user SIR distribution for CoMP ZFBF in HCNs, using the spatial PPP model to characterize other-cell interference from all BSs in the entire plane. These bounds are closed-form and show clear dependence on important parameters such as the overhead message bit size and the number of coordinated cells. Using this SIR characterization along with the CoMP evaluation framework, we quantify the downlink CoMP ZFBF coverage and throughput as functions of overhead messaging configurations. Compared with previous work, our results provide new design insights for CoMP ZFBF, for example, on the best number of coordinated cells and the appropriate configuration of overhead channels. 

\section{System Model}
\subsection{Downlink Heterogeneous Cellular Network Modeling}
We consider a downlink heterogeneous cellular network consisting of $K$ different types of base stations (e.g. macrocells, microcells, picocells, femtocells and distributed antennas). We refer to a specific type of BSs as a tier, and thus call the network a $K$-tier HCN. For the $k$-th tier, the BSs have transmit power $P_{k}$, number of antennas $N_{k}$, path loss exponent $\alpha_{k}$ and spatial density $\lambda_k$ BSs per unit area. For example, compared with macrocells, femtocells typically have much lower transmit power, fewer antennas and eventually a much higher spatial density as tens to hundreds of femtocells will often be deployed in the area of a macrocell\cite{And12JSAC}.

We consider a typical end-user equipped with a single antenna. We denote its location as the origin and the locations of BSs as $\{X_{i,k}, k= 1,2\ldots,K, \, i\in \mathbb{N}\}$, where $X_{i,k}$ is the location of the $i^{th}$ closest BS to the origin in the $k^{th}$ tier. We assume all tiers are independently distributed on the plane $\mathbb{R}^2$ and BSs in the $k^{th}$ tier are distributed according to homogeneous Poisson Point Process (PPP) $\Phi_k$ with intensity $\lambda_k$.  

For the purpose of cell association, the end-user will listen to the downlink pilot signals from different BSs, and measure their long-term average powers.
With short-term fading averaged out, the end-user will associate with the BS from whom it receives the strongest average power $\max\limits_{k=1,2\ldots,K, i \in \mathbb{N}} \{P_k |X_{i,k}|^{-\alpha_k} \}$. The index of the selected serving BS is
\begin{align}
\{i^\star,k^\star\} &= \arg \max_{k=1,2,\ldots K,\; i\in \mathbb{N}} \{P_k |X_{i,k}|^{-\alpha_k}\} = \arg \max_{k=1,2,\ldots, K} \{i^\star=1, P_k |X_{1,k}|^{-\alpha_k}\}.
\end{align}
Therefore, we denote the selected serving BS as $\mathrm{BS}_{1,k^\star}$. When $\mathrm{BS}_{1,k^\star}$ is capable of coordinating with $L$ other cells, it usually would prefer to coordinate with the $L$ strongest interfering cells, for example, to maximally cancel inter-cell interference in CoMP zero-forcing beamforming \cite{Gesbert10CoMPSummary,ZhangICIC}. We denote the set of these $L$ coordinated BSs as $\mathcal{B}_L$.

\subsection{Overhead Messaging in CoMP Schemes}
In this paper, we investigate the impact of realistic inter-cell overhead signaling on a class of CoMP schemes, where an end-user is served by only one BS without user data sharing among coordinated cells. To cooperate with the serving cell, a coordinated BS $\mathrm{BS}_{i,k} \in \mathcal{B}_L$ needs to be updated with certain key parameters in that cell. The choices of these cooperation-dependent parameters are different among various CoMP schemes, with common examples being user channel states and user scheduling information. These parameters naturally fluctuate over time because of the dynamics in network environment and user mobility. 
\begin{assumption}\label{AST}
\emph{(i.i.d. discrete time model) We assume the cooperation-dependent parameters stay constant for a time $\mathcal{T}_{i,k}$ (either \emph{deterministic} or \emph{random}) and then change to a new i.i.d. value. We denote $\eta_{i,k}=1/\mathbb{E}[\mathcal{T}_{i,k}]$ as the average change rate.}
\end{assumption}
This assumption is true for parameters such as user scheduling information, which are determined by BSs and stays constant per transmission time interval (TTI). On the other hand, parameters such as channel fading may change continuously. However, the i.i.d. discrete time model on channel fading (named \emph{block fading} model in other literature) is fairly accurate and widely used\cite{TseWC,Loveetal08}.

We assume a parameterized gamma distribution on $\mathcal{T}_{i,k}$ in Assumption \ref{AST}
\begin{align} \label{EquTModel}
\mathcal{T}_{i,k} \sim \mbox{Gamma} \left(M, \frac{1}{M \eta_{i,k}}\right),
\end{align}
where $M$ is the parameter. Note that $\mathbb{E}[\mathcal{T}_{i,k}]=\frac{1}{\eta_{i,k}}$ is unchanged under various values of $M$. This general distribution with different values of $M$ can model different scenarios from deterministic process (i.e. $\mathcal{T}_{i,k}$ becomes deterministic as $M \rightarrow \infty$) to Poisson process (i.e. $\mathcal{T}_{i,k}$ is exponentially distributed for $M=1$).

Because of the dynamics of these cooperation-dependent parameters, the end-user or its serving cell needs to detect the changes in their values. Several parameters like user scheduling information are determined by the BS and their values are thus known on real-time basis. For other parameters such as channel fading, the end-user can constantly measure their values through frequent pilot signals. Once the values of the parameters change, an overhead message will then be generated and sent. In this way, the overhead message will be updated every $\mathcal{T}_{i,k}$, a sufficient frequency without unnecessary burden on the overhead channel. 

After being generated by the serving base station $\mathrm{BS}_{1,k^\star}$, an overhead message is transmitted to $\mathrm{BS}_{i,k}$ through inter-cell overhead channel. The overhead channel incurs delay denoted as $\mathcal{D}_{i,k}$. With this updating overhead message, $\mathrm{BS}_{i,k}$ can take the appropriate cooperation strategy accordingly. Note that each overhead message only has a lifetime of $\mathcal{T}_{i,k}$, because the parameters change after $\mathcal{T}_{i,k}$ and a new overhead message is generated.

We now consider overhead design in CoMP ZFBF, where coordinated neighboring BSs use zero-forcing precoders to null their interference\cite{ZhangICIC,Gesbert10CoMPSummary}. Therefore, the cooperation-dependent parameter in CoMP ZFBF is the channel direction information (CDI)
\begin{align}
\mathbf{\tilde{h}}_{i,k}\stackrel{\triangle}{=}\frac{\mathbf{h}_{i,k}}{\|\mathbf{h}_{i,k}\|},
\end{align}
where $\mathbf{h}_{i,k}$ is the $N_k \times 1$ fading vector between $\mathrm{BS}_{i,k}$ and the end-user. In this paper, we assume uncorrelated Rayleigh fading, i.e. each component of $\mathbf{h}_{i,k}$ is i.i.d. complex Gaussian $\mathcal{CN}(0,1)$. According to Assumption \ref{AST}, the fading $\mathbf{h}_{i,k}$ stays constant for a time $\mathcal{T}_{i,k}$ and then changes independently, i.e. block fading\cite{TseWC}. In this specific example, $\mathcal{T}_{i,k}$ is the channel coherence time. 

In the beginning of each fading block, the end-user observes the new fading value $\mathbf{h}_{i,k}$. It searches through a codebook $\mathcal{C}_{i,k}$ known by both itself and $\mathrm{BS}_{i,k}$, which consists of $2^{B_{i,k}}$ codewords. From the codebook $\mathcal{C}_{i,k}$, the end-user will choose the codeword $\mathbf{c}_{i,k}$ closest to current fading direction $\tilde{\mathbf{h}}_{i,k}$ such that $|\mathbf{\tilde{h}}_{i,k} \mathbf{c}_{i,k}|$ is maximized. The index of this selected codeword $\mathbf{c}_{i,k}$ is fed back to $\mathrm{BS}_{1,k^\star}$ using $B_{i,k}$ bits. For $\mathrm{BS}_{1,k^\star}$, the overhead messages form an arrival process with inter-arrival time $\mathcal{T}_{i,k}$. The serving BS $\mathrm{BS}_{1,k^\star}$ then transmits these overhead messages to $\mathrm{BS}_{i,k}$ through an inter-cell overhead channel. Based on the received overhead, $\mathrm{BS}_{i,k}$ chooses a zero-forcing precoding vector $\mathbf{f}_{i,k}$ such that $|\mathbf{f}_{i,k} \, \mathbf{c}_{i,k}|^2=0$. See Fig. \ref{PicConceptSystem} for a conceptual plot of overhead messaging in CoMP ZFBF.

\begin{figure}[h]
\centerline{
\includegraphics[width=4in]{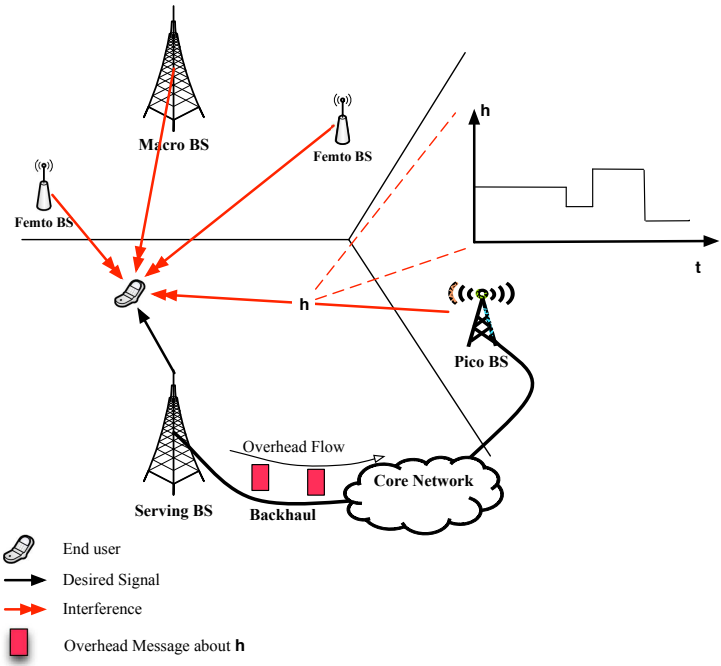}}
\caption{A conceptual plot of CoMP ZFBF in a heterogeneous cellular network. The end-user's serving BS coordinates with a pico BS. As the channel fading $\mathbf{h}$ between the pico BS and the end-user changes over time, the serving BS will perform frequent overhead messaging through backhaul, to inform the pico BS about the current fading values.} \label{PicConceptSystem}
\end{figure}

\subsection{The Impact of Overhead Delay}
Realistic overhead messaging has two major imperfections -- delay and quantization inaccuracy. To make the discussion more concrete, we describe the impact of overhead delay in the context of CoMP ZFBF, while the situations in other CoMP schemes are similar. The delay $\mathcal{D}_{i,k}$ of an overhead message is defined as the time between when that overhead is generated (i.e. the beginning of the respective fading block) and when it is received by $\mathrm{BS}_{i,k}$. It is caused by unavoidable propagation time and the imperfections of the overhead channel such as congestion and hardware delays\cite{XiaJoAnd11}. We call this time window the \textit{overhead messaging phase}. If the overhead delay $\mathcal{D}_{i,k}$ is smaller than the fading block length $\mathcal{T}_{i,k}$, we call the rest time $\mathcal{T}_{i,k}-\mathcal{D}_{i,k}$ in that fading block the \textit{cooperation phase}. Note that the cooperation phase may not exist in a fading block if the overhead delay $\mathcal{D}_{i,k}$ is larger than $\mathcal{T}_{i,k}$. See Fig. \ref{PicCoMPSystem} for an example.

\begin{figure}[h]
\centerline{
\includegraphics[width=4in]{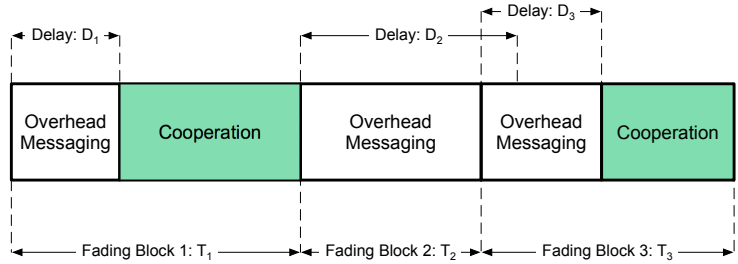}}
\caption{Overhead messaging phases and cooperation phases of a coordinated BS in CoMP ZFBF. The overhead message phase starts from the beginning of each fading block and has a time length of overhead delay $\mathcal{D}$. In a fading block, the coordinated BS will have cooperation phase only if the overhead delay $\mathcal{D}$ is smaller than the fading block length $\mathcal{T}$. } \label{PicCoMPSystem}
\end{figure}

The interference from $\mathrm{BS}_{i,k}$ is different between these two phases. In the overhead messaging phase, the channel fading $\mathbf{h}_{i,k}$ has already changed but $\mathrm{BS}_{i,k}$ does not know its value yet. Therefore the zero-forcing precoder $\mathbf{f}_{i,k}$ is still determined from previously received overhead message. According to Assumption \ref{AST}, the current fading state $\mathbf{h}_{i,k}$ is independent of the previous fading block and thus the precoder $\mathbf{f}_{i,k}$ based on it. Therefore, statistically the interference $|\mathbf{f}_{i,k}\mathbf{h}_{i,k}|^2$ is not reduced in the overhead message phase, which is the worst-case interference scenario\cite{JindalRVQ,Loveetal08}.

On the other hand, $\mathrm{BS}_{i,k}$ receives the new overhead message in the cooperation phase and adjusts its zero-forcing precoder $\mathbf{f}_{i,k}$ accordingly. Because of Assumption \ref{AST}, the fading value $\mathbf{h}_{i,k}$ is assumed to keep unchanged during this block. Therefore the new overhead message remains accurate (minus quantization errors) in the cooperation phase and the selected precoder $\mathbf{f}_{i,k}$ minimizes the interference $|\mathbf{f}_{i,k}\mathbf{h}_{i,k}|^2$ for the entire phase\cite{JindalRVQ,Loveetal08}. This is the best-case interference scenario.

It is seen that Assumption \ref{AST} simplifies the impact of overhead delay into two opposite extreme cases: the interference to $\mathrm{BS}_{i,k}$ is either not reduced in the overhead messaging phase or maximally reduced in the cooperation phase. In practice, the channel fading $\mathbf{h}_{i,k}$ continuously changes with temporal correlation, instead of the i.i.d. discrete-time model in Assumption \ref{AST}. Therefore, the interference from $\mathrm{BS}_{i,k}$ should gradually change over time and is in fact bounded by the two extreme cases. However, it is hard to quantify mathematically. We use Assumption \ref{AST} because it is tractable for analysis, widely used in previous literature on MIMO systems \cite{Loveetal08,TseWC} and allows a first-order analysis on the impact of overhead delay. Future work should consider more complicated models on channel fading, such as other discrete time models in \cite{Wang95MarkovModel,Ako10Correlated,Isu07Correlated}.

\subsection{The Impact of Overhead Quantization Error}
Another concern of realistic overhead messaging is the finite overhead quantization bits $B_{i,k}$. Larger $\mathrm{B}_{i,k}$ usually translates into smaller quantization error and thus higher cooperation gains. However, the exact impact of $\mathrm{B}_{i,k}$ depends on the specific CoMP scheme and the overhead codebook $\mathcal{C}_{i,k}$. See \cite{Loveetal08} for an overview. We now discuss its impact in the context of CoMP ZFBF.

The end-user's SIR $\gamma$ in CoMP ZFBF can be generally expressed as\footnote{In modern cellular networks, thermal noise is not an important consideration either in cell interior where the signal power is strong or in cell edge where interference is usually much larger. Interference is more dominant especially in HCNs because of additional overlaid cells with high spatial densities. We therefore neglect thermal noise and consider SIR in this paper.}
\begin{align} \label{EquPrecoderSIR}
\gamma = \frac{P_{k^\star} |X_{1,k^\star}|^{-\alpha_{k^\star}} |\mathbf{f}_{1,k^\star} \mathbf{h}_{1,k^\star}|^2}{ \sum_{
X_{i,k} \in \bigcup\limits_{k=1}^K \Phi_k \setminus \{ X_{1,k^\star}\}} P_k |X_{i,k}|^{-\alpha_k} |\mathbf{f}_{i,k} \mathbf{h}_{i,k}|^2},
\end{align}
where the value of $|\mathbf{f}_{i,k}\mathbf{h}_{i,k}|$ is elaborated on in the following.

\begin{enumerate}
\item The end-user's serving BS $\mathrm{BS}_{1,k^\star}$ needs to null its interference to the $L$ coordinated cells. Meanwhile it also wants to maximize the signal power $|\mathbf{f}_{1,k^\star} \mathbf{h}_{1,k^\star}|^2$ to the end-user. Thus its precoder $\mathbf{f}_{1,k^\star}$ is chosen such that $|\mathbf{f}_{1,k^\star} \mathbf{h}_{1,k^\star}|^2 \sim \chi_{2N_{k^\star}-2L}^2$\cite{ZhangICIC}. 
\item A coordinated BS $\mathrm{BS}_{i,k} \in \mathcal{B}_L$ cannot null its interference during the overhead messaging phase, because its zero-forcing precoder $\mathbf{f}_{i,k}$ is independent of $\mathbf{\tilde{h}}_{i,k}$ and we have $|\mathbf{f}_{i,k} \mathbf{h}_{i,k}|^2 \sim  \chi_2^2$. In the cooperation phase, $\mathrm{BS}_{i,k}$ receives the updated overhead indicating the codeword $\mathbf{c}_{i,k} = \arg \max_{\mathbf{c} \in \mathcal{C}_{i,k}}  |\mathbf{\tilde{h}}_{i,k} \mathbf{c}|$ and chooses $\mathbf{f}_{i,k}$ such that $|\mathbf{f}_{i,k} \, \mathbf{c}_{i,k}|^2=0$. However, because $\mathbf{c}_{i,k}$ is not the exact CDI $\mathbf{\tilde{h}}_{i,k}$, the value of $|\mathbf{f}_{i,k} \mathbf{h}_{i,k}|^2$ is still positive and dependent on the design of overhead codebook $\mathcal{C}_{i,k}$. In this paper, we assume random vector quantization (RVQ) codebook $\mathcal{C}_{i,k}$, which is commonly used in previous CoMP ZFBF\cite{JindalRVQ,ZhangICIC,Loveetal08}. Under this assumption, we then have $|\mathbf{f}_{i,k} \mathbf{h}_{i,k}|^2 \sim 2^{-\frac{B_{i,k}}{N_k-1}}$ for $\mathrm{BS}_{i,k}$ in the cooperation phase.
\item A non-coordinated BS $\mathrm{BS}_{i,k} \notin \mathcal{B}_L$ chooses the precoder independent of its interference to the end-user, and will have $|\mathbf{f}_{i,k} \, \mathbf{h}_{i,k}|^2 \sim \chi_2^2$.
\end{enumerate}

Based on the derivations above, the end-user SIR can be expressed as 
\begin{align} \label{EquGeneralSIR}
    \gamma=\frac{P_{k^\star} S_{1,k^\star} |X_{1,k^\star}|^{-\alpha_{k^\star}} }{ \sum_{X_{i,k}\in \bigcup\limits_{k=1}^K \Phi_k \setminus \{X_{1,k^\star}\}}P_k \rho_{i,k} S_{i,k}|X_{i,k}|^{-\alpha_k}  }
\end{align} 
where $S_{1,k^\star} \sim \chi^2 (2N_{k^\star}-2L)$, $S_{i,k} \sim  \chi_2^2$, and $\rho_{i,k}$ is the \emph{interference cancellation factor} for $\mathrm{BS}_{i,k}$ 
\begin{align} \label{EquRho}
\rho_{i,k} = 
\begin{cases}
1 & \mbox{For $\mathrm{BS}_{i,k} \notin \mathcal{B}_L$}\\
1 &  \mbox{For $\mathrm{BS}_{i,k} \in \mathcal{B}_L$ during the overhead messaging phase} \\
2^{-\frac{B_{i,k}}{N_k-1}} & \mbox{For $\mathrm{BS}_{i,k} \in \mathcal{B}_L$ during the cooperation phase}\\ 
\end{cases}
\end{align}

\section{CoMP Throughput Evaluation With Imperfect Overhead Messaging}
In this section, we present a throughput evaluation framework for a class of downlink CoMP schemes, where inter-cell overhead messaging does not include user data. Common examples in this category include CoMP beamforming\cite{ZhangICIC,Dah10beamforming} and interference alignment\cite{CadJaf08}. The general framework considers the practical issues from overhead messaging but is also analytically tractable. 

\begin{lemma} \label{LemmaTimeFraction}
The long-term time fraction $\tau_{i,k}$ that a coordinated BS $\mathrm{BS}_{i,k}$ is in the cooperation phase is
\begin{align}\label{EquLemma}
\tau_{i,k}=p(\mathcal{T}_{i,k},\infty)-\eta_{i,k} \int\limits_{0}^{\infty} \left\{p(\mathcal{T}_{i,k},\infty)-p(\mathcal{T}_{i,k},s)\right\} \mathrm{d}s,
\end{align}
where $p(\mathcal{T}_{i,k},s)\triangleq\mathbb{P} (\mathcal{D}_{i,k} \leq \mathcal{T}_{i,k}, \mathcal{D}_{i,k} \leq d)$ is overhead delay distribution.
\end{lemma}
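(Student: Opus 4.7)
The plan is to recognise the cooperation phase as the reward in a renewal process and apply the renewal reward theorem. By Assumption~\ref{AST} successive fading blocks have i.i.d.\ lengths $\mathcal{T}_{i,k}$ with finite mean $1/\eta_{i,k}$, and within one block the coordinated BS spends exactly $(\mathcal{T}_{i,k}-\mathcal{D}_{i,k})^{+}$ time units in the cooperation phase (zero whenever the overhead arrives after the block ends). The renewal reward theorem then yields
\begin{equation*}
\tau_{i,k}\;=\;\frac{\mathbb{E}\bigl[(\mathcal{T}_{i,k}-\mathcal{D}_{i,k})^{+}\bigr]}{\mathbb{E}[\mathcal{T}_{i,k}]}\;=\;\eta_{i,k}\,\mathbb{E}\bigl[(\mathcal{T}_{i,k}-\mathcal{D}_{i,k})^{+}\bigr],
\end{equation*}
so the whole exercise reduces to matching this expectation against~\eqref{EquLemma}.

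The matching step is a tail-integral rewrite. I would split $(\mathcal{T}-\mathcal{D})^{+}=\mathcal{T}\mathbf{1}\{\mathcal{D}\leq\mathcal{T}\}-\mathcal{D}\mathbf{1}\{\mathcal{D}\leq\mathcal{T}\}$ and handle the two summands separately. To the second summand I would apply the identity $\mathbb{E}[X\mathbf{1}_{A}]=\int_{0}^{\infty}\mathbb{P}(X>s,A)\,ds$ with $X=\mathcal{D}_{i,k}$ and $A=\{\mathcal{D}_{i,k}\leq\mathcal{T}_{i,k}\}$, giving by Fubini
\begin{equation*}
\eta_{i,k}\,\mathbb{E}\bigl[\mathcal{D}_{i,k}\mathbf{1}\{\mathcal{D}_{i,k}\leq\mathcal{T}_{i,k}\}\bigr]\;=\;\eta_{i,k}\!\int_{0}^{\infty}\!\bigl[p(\mathcal{T}_{i,k},\infty)-p(\mathcal{T}_{i,k},s)\bigr]\,ds,
\end{equation*}
which is exactly the integral term in~\eqref{EquLemma}. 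The first summand should then be shown to reduce to the boundary term $p(\mathcal{T}_{i,k},\infty)=\mathbb{P}(\mathcal{D}_{i,k}\leq\mathcal{T}_{i,k})$, and combining the two pieces gives the stated identity.

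The main obstacle I foresee is precisely that last reduction. The identity $\eta_{i,k}\,\mathbb{E}[\mathcal{T}_{i,k}\mathbf{1}\{\mathcal{D}_{i,k}\leq\mathcal{T}_{i,k}\}]=\mathbb{P}(\mathcal{D}_{i,k}\leq\mathcal{T}_{i,k})$ is equivalent to $\mathcal{T}_{i,k}$ and $\mathbf{1}\{\mathcal{D}_{i,k}\leq\mathcal{T}_{i,k}\}$ being uncorrelated, which is not automatic: even when $\mathcal{T}_{i,k}$ and $\mathcal{D}_{i,k}$ are independent, the indicator is monotone in $\mathcal{T}_{i,k}$, so FKG forces a non-negative covariance. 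I therefore expect the proof either to read $p(\mathcal{T}_{i,k},s)$ as the conditional probability given the realised value of $\mathcal{T}_{i,k}$ (so that an outer expectation over $\mathcal{T}_{i,k}$ is implicitly absorbed in the notation), or to invoke a specific structural property of the overhead-channel models from~\cite{XiaJoAnd11}; pinning down which convention is in force is the delicate point. Fubini itself is cheap, because $\mathcal{D}_{i,k}\mathbf{1}\{\mathcal{D}_{i,k}\leq\mathcal{T}_{i,k}\}\leq\mathcal{T}_{i,k}$ and $\mathbb{E}[\mathcal{T}_{i,k}]<\infty$ supply absolute integrability.
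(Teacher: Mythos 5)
Your proposal is correct and follows essentially the same route as the paper: the appendix likewise computes the long-run fraction as (probability a block has a cooperation phase) $\times$ (mean cooperation duration divided by mean block length), and its integral term is exactly your tail-integral rewrite of $\mathbb{E}\bigl[\mathcal{D}_{i,k}\mathbf{1}\{\mathcal{D}_{i,k}\leq\mathcal{T}_{i,k}\}\bigr]$. The delicate point you flag is real rather than a defect of your own argument: the paper writes the conditional cooperation duration as $\mathbb{E}[\mathcal{T}]-\mathbb{E}[\mathcal{D}\mid\mathcal{D}\leq\mathcal{T}]$ in place of $\mathbb{E}[\mathcal{T}\mid\mathcal{D}\leq\mathcal{T}]-\mathbb{E}[\mathcal{D}\mid\mathcal{D}\leq\mathcal{T}]$, i.e.\ it silently performs exactly the uncorrelatedness substitution $\eta_{i,k}\,\mathbb{E}[\mathcal{T}_{i,k}\mathbf{1}\{\mathcal{D}_{i,k}\leq\mathcal{T}_{i,k}\}]=\mathbb{P}(\mathcal{D}_{i,k}\leq\mathcal{T}_{i,k})$ that you identify, which is exact when $\mathcal{T}_{i,k}$ is deterministic (the $M\to\infty$ case, and the fixed $80$\,ms blocks used in the simulations) or when $p(\mathcal{T}_{i,k},s)$ is read as conditional on the realized block length, but is an unstated approximation for a genuinely random $\mathcal{T}_{i,k}$.
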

\begin{IEEEproof}
See Appendix \ref{ProofLemmaTimeFraction}.
\end{IEEEproof} 

In our previous work\cite{XiaJoAnd11}, we provide general models on overhead messaging delay for both backhaul and over-the-air inter-cell overhead channels. We then derive the delay distribution $p(\cdot)$ as an explicit function of inter-cell overhead channel parameters (e.g. over-the-air overhead channel bandwidth). Note that our overhead delay models only affect the characterization of $p(\cdot)$, while the result in Lemma \ref{LemmaTimeFraction} holds under various forms of delay distribution $p(\cdot)$.

In previous literature, no overhead delay is considered, i.e. delay $\mathcal{D}_{i,k} = 0$ and thus $p(\mathcal{T}_{i,k},s)=1$ for any $s$ and $\mathcal{T}_{i,k}$. Under this condition, $\tau_{i,k}$ in (\ref{EquLemma}) becomes $1$ as well, which means $\mathrm{BS}_{i,k}$ is always in the cooperation phase with interference maximally reduced. This is obviously an over-optimistic prediction on the interference.

\begin{theorem} \label{ThmFramework}
For CoMP schemes without user data sharing, the end-user's long-term throughput is 
\begin{align} \label{EquMetric}
\mathcal{R} = \sum\limits_{\mathcal{B}\subset \mathcal{B}_L} p_\mathcal{B} R (\gamma_{\mathcal{B}}),
\end{align}
where the summation is over all possible subset $\mathcal{B} \subset \mathcal{B}_L$, $\gamma_{\mathcal{B}}$ is the end-user SIR when BSs $\in \mathcal{B}$ are in the cooperation phase and BSs $\in \mathcal{B}_L \setminus \mathcal{B}$ are in the overhead messaging phase, $R(\cdot)$ is the SIR-rate mapping function, and 
\begin{align}
 p_\mathcal{B}= \left(\prod\limits_{\mathrm{BS}_{i,k} \in \mathcal{B}} \tau_{i,k} \prod\limits_{\mathrm{BS}_{i,k} \in \mathcal{B}_L \setminus \mathcal{B}} (1-\tau_{i,k})\right),
\end{align}
\end{theorem}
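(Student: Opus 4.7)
The plan is to reduce the long-term throughput to an ergodic average and then condition on which subset of coordinated BSs happens to be in the cooperation phase at a generic time instant. First I would invoke ergodicity of the composite process (fading blocks, overhead-arrival process, and the indicator of each BS's cooperation state) to write
\begin{equation*}
\mathcal{R}=\lim_{T\to\infty}\frac{1}{T}\int_{0}^{T}R\bigl(\gamma(t)\bigr)\,\mathrm{d}t=\mathbb{E}\bigl[R(\gamma)\bigr],
\end{equation*}
so that computing the long-term throughput reduces to computing an expectation at a stationary time instant.

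Next I would partition the sample space according to the random subset
\begin{equation*}
\mathcal{B}(t)=\{\mathrm{BS}_{i,k}\in\mathcal{B}_{L}:\mathrm{BS}_{i,k}\text{ is in its cooperation phase at time }t\}.
\end{equation*}
By Lemma~\ref{LemmaTimeFraction}, the stationary probability that a particular coordinated BS $\mathrm{BS}_{i,k}$ is in the cooperation phase is $\tau_{i,k}$. Since each coordinated BS has its own independent fading block process and its own independent overhead channel (and the cooperation-dependent parameters were assumed i.i.d.\ across blocks in Assumption~\ref{AST}), the events $\{\mathrm{BS}_{i,k}\text{ cooperating at }t\}$ are mutually independent across $(i,k)\in\mathcal{B}_{L}$. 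Therefore
\begin{equation*}
\mathbb{P}\bigl(\mathcal{B}(t)=\mathcal{B}\bigr)=\prod_{\mathrm{BS}_{i,k}\in\mathcal{B}}\tau_{i,k}\prod_{\mathrm{BS}_{i,k}\in\mathcal{B}_{L}\setminus\mathcal{B}}(1-\tau_{i,k})=p_{\mathcal{B}},
\end{equation*}
which is exactly the expression claimed for $p_{\mathcal{B}}$.

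Finally, I would apply the total expectation formula. Conditional on the event $\{\mathcal{B}(t)=\mathcal{B}\}$, the interference-cancellation factors $\rho_{i,k}$ in \eqref{EquRho} are determined (equal to $2^{-B_{i,k}/(N_{k}-1)}$ for $\mathrm{BS}_{i,k}\in\mathcal{B}$ and $1$ otherwise), and the end-user SIR reduces to the configuration-specific SIR $\gamma_{\mathcal{B}}$ via \eqref{EquGeneralSIR}. Writing $R(\gamma_{\mathcal{B}})$ for the corresponding (fading- and location-averaged) rate, the tower property gives
\begin{equation*}
\mathcal{R}=\mathbb{E}\bigl[R(\gamma)\bigr]=\sum_{\mathcal{B}\subset\mathcal{B}_{L}}\mathbb{P}\bigl(\mathcal{B}(t)=\mathcal{B}\bigr)\,R(\gamma_{\mathcal{B}})=\sum_{\mathcal{B}\subset\mathcal{B}_{L}}p_{\mathcal{B}}\,R(\gamma_{\mathcal{B}}),
\end{equation*}
which is \eqref{EquMetric}.

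The main obstacle I expect is the independence step: one has to argue carefully that the cooperation-phase indicators for different coordinated BSs are truly independent at a stationary instant, since they share the common serving BS $\mathrm{BS}_{1,k^{\star}}$ as the source of overhead generation. The resolution is that the fading blocks $\mathcal{T}_{i,k}$ are per-BS (each link has its own coherence interval) and the overhead delays $\mathcal{D}_{i,k}$ are per-link, so by Assumption~\ref{AST} the two-phase renewal processes on distinct links are independent, and the marginal fraction $\tau_{i,k}$ from Lemma~\ref{LemmaTimeFraction} controls each factor. The remaining steps (ergodicity and total expectation) are standard once this independence is in hand.
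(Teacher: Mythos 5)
Your proof is correct and follows essentially the same route as the paper's: decompose by the random subset of coordinated BSs in the cooperation phase, use Lemma~\ref{LemmaTimeFraction} for the marginal probability $\tau_{i,k}$, invoke independence across links for the product form of $p_{\mathcal{B}}$, and average over scenarios. You are in fact somewhat more careful than the paper, which asserts the product form and the ``average over all possible scenarios'' step without the explicit ergodicity and independence justifications you supply.
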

\begin{IEEEproof}
Due to the overhead messaging delay, each coordinated BS $\mathrm{BS}_{i,k} \in \mathcal{B}_L$ now has two states: 1) the overhead messaging phase (with probability $1-\tau_{i,k}$) when the overhead message at $\mathrm{BS}_{i,k}$ is already outdated and the updated overhead has not been received yet; and 2) the cooperation phase (with probability $\tau_{i,k}$) when $\mathrm{BS}_{i,k}$ receives the updated overhead message. $\mathrm{BS}_{i,k}$ has different cooperation performance between these two states, for example, as shown in (\ref{EquRho}) for CoMP ZFBF. Therefore, we use a subset $\mathcal{B} \subset \mathcal{B}_L$ to denote the scenario that only BSs $\in \mathcal{B}$ are in the cooperation phase. The probability of this scenario is 
\begin{align}
p_{\mathcal{B}} = \left(\prod\limits_{\mathrm{BS}_{i,k} \in \mathcal{B}} \tau_{i,k} \prod\limits_{\mathrm{BS}_{i,k} \in \mathcal{B}_L \setminus \mathcal{B}} (1-\tau_{i,k})\right).
\end{align}
Each subset $\mathcal{B} \subset \mathcal{B}_L$ corresponds to a possible scenario regarding which coordinated BSs are in the cooperation phase. The end-user's long-term is the average rate over all possible scenarios.
\end{IEEEproof}

As shown in (\ref{EquMetric}), the CoMP throughput evaluation framework explicitly quantifies the impact of overhead delay through the time fraction $\tau_{i,k}$ and probability distribution $p_\mathcal{B}$. It also includes the impact of finite overhead bit size because the end-user SIR $\gamma_{\mathcal{B}}$ is affected by overhead quantization error, for example, as shown in (\ref{EquRho}) for CoMP ZFBF. In this way, the framework considers the imperfections in the overhead messaging. Combined with the SIR characterizations, it can be used to quantify the throughput and coverage of different CoMP schemes. 

As we mentioned before, previous work does not fully consider practical issues in inter-cell overhead messaging. In particular, they ignore the overhead delay, which is in fact non-trivial in most network environments\cite{Qcom12ITA,HuaWei09,Qcomm10CTW}. To show the importance of modeling and analysing overhead delay in CoMP evaluation, we compare the result in Theorem \ref{ThmFramework} with previous work. Without overhead delay, the coordinated BSs always have the updated overhead and stay in the cooperation phase. Obviously, this is a idealized special case of Theorem \ref{ThmFramework}.

\begin{corollary} \label{CoroIdeal}
Under the assumption of delay-free overhead messaging, the long-term CoMP throughput is 
\begin{align} \label{EquCoroIdeal}
\mathcal{R}=R (\gamma_{\mathcal{B}_L}),
\end{align}
where $\gamma_{\mathcal{B}_L}$ is defined in Theorem \ref{ThmFramework}. 
\end{corollary}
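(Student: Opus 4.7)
The plan is to obtain the corollary as a direct specialization of Theorem \ref{ThmFramework} once the delay-free assumption has been translated into constraints on the quantities $\tau_{i,k}$ and $p_{\mathcal{B}}$. The delay-free hypothesis means $\mathcal{D}_{i,k}=0$ almost surely, so $\mathbb{P}(\mathcal{D}_{i,k}\leq \mathcal{T}_{i,k},\,\mathcal{D}_{i,k}\leq s)=1$ for every $s\geq 0$ and every realization of $\mathcal{T}_{i,k}$. Substituting $p(\mathcal{T}_{i,k},s)\equiv 1$ into the expression for $\tau_{i,k}$ from Lemma \ref{LemmaTimeFraction} yields
\begin{equation*}
\tau_{i,k}=1-\eta_{i,k}\int_{0}^{\infty}\{1-1\}\,\mathrm{d}s=1,
\end{equation*}
for every coordinated BS $\mathrm{BS}_{i,k}\in\mathcal{B}_L$. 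In other words, a coordinated base station is, with full long-term probability, in the cooperation phase.

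Next, I would plug $\tau_{i,k}=1$ into the definition of $p_{\mathcal{B}}$ given in Theorem \ref{ThmFramework}. Each factor $(1-\tau_{i,k})$ over the complement $\mathcal{B}_L\setminus\mathcal{B}$ then vanishes, so $p_{\mathcal{B}}=0$ whenever $\mathcal{B}\subsetneq\mathcal{B}_L$, and $p_{\mathcal{B}_L}=\prod_{\mathrm{BS}_{i,k}\in\mathcal{B}_L}1=1$. Consequently the sum in (\ref{EquMetric}) collapses onto the single term indexed by $\mathcal{B}=\mathcal{B}_L$, giving $\mathcal{R}=R(\gamma_{\mathcal{B}_L})$, which is exactly the claim of the corollary.

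There is essentially no obstacle here: the corollary is a bookkeeping consequence of Lemma \ref{LemmaTimeFraction} and Theorem \ref{ThmFramework}. The only minor subtlety worth noting in the write-up is verifying that the integral vanishes for \emph{every} distribution of $\mathcal{T}_{i,k}$ consistent with Assumption \ref{AST}, which is immediate because the integrand is identically zero once $p(\mathcal{T}_{i,k},\cdot)\equiv 1$. I would close by remarking that this makes precise the statement in the preceding discussion that ignoring overhead delay is an over-optimistic special case, since it automatically places every coordinated BS in the maximally-interference-cancelling cooperation phase.
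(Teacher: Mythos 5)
Your proof is correct and follows the same route as the paper: delay-free messaging forces $\tau_{i,k}=1$ for every coordinated BS (the paper asserts this directly, while you derive it by substituting $p(\mathcal{T}_{i,k},s)\equiv 1$ into Lemma \ref{LemmaTimeFraction}, a point the paper also makes in its discussion just after that lemma), so $p_{\mathcal{B}}$ collapses onto $\mathcal{B}=\mathcal{B}_L$ and the sum in (\ref{EquMetric}) reduces to $R(\gamma_{\mathcal{B}_L})$. No gaps.
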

\begin{IEEEproof}
When overhead messaging has no delay, all coordinated BSs will always be in the cooperation phase, i.e. $\tau_{i,k} =1$ for each $\mathrm{BS}_{i,k}\in \mathcal{B}_{L}$. Therefore we have $p_{\mathcal{B}}=1$ if $\mathcal{B}=\mathcal{B}_L$ and $p_\mathcal{B}=0$ otherwise. The end-user's rate is then $\mathcal{R} = R (\gamma_{\mathcal{B}_L})$. 
\end{IEEEproof}

As shown in Corollary \ref{CoroIdeal}, assuming no overhead delay significantly simplifies the analysis of CoMP schemes. However, as we elaborated before and will be shown in the numerical simulations, this assumption is far from the reality and causes the wide gaps between analytical predictions and realistic implementations.

\section{CoMP Zero-Forcing Beamforming Throughput Analysis}
In this section, we derives the distribution of the end-user's SIR $\gamma_{\mathcal{B}}$ for CoMP ZFBF, which will be used with the evaluation framework in Theorem \ref{ThmFramework} to quantify the CoMP ZFBF throughput. We first derive the SIR CDF in 1-tier cellular networks in Theorem \ref{ThmSIR1tier}, and then extend it to the general HCN scenario in Theorem \ref{ThmSIRKtier}. 

For now, we simplify the notation for $1$-tier cellular networks by dropping the tier index $k$. Specifically in $1$-tier networks, the BSs have the same transmit power $P$, number of antennas $N$ and path loss exponent $\alpha$. Their locations $\{X_i, i \in \mathbb{N}\}$ form a PPP $\Phi$ with intensity $\lambda$. Therefore the serving BS is simply $\mathrm{BS}_{1}$ -- that is, the nearest BS to the end user -- and the $L$ coordinated BSs are $\mathcal{B}_L=\{\mathrm{BS}_{2}, \ldots, \mathrm{BS}_{L+1}\}$ -- that is, the second to $(L+1)$ nearest BSs to the end-user. Similarly, we now use notation $\rho_{i}$ and $B_{i}$ as simplified versions of $\rho_{i,k}$ and $B_{i,k}$ in (\ref{EquRho}).

In $1$-tier networks, the end-user SIR $\gamma_\mathcal{B}$ in (\ref{EquMetric}) can be simplified as
\begin{align} \label{Equ1tierSIR}
    \gamma_\mathcal{B} &=\frac{P S_1 |X_1|^{-\alpha}}{\sum\limits_{X_i \in \Phi \backslash\{X_1\}} P \rho_i S_i |X_i|^{-\alpha} }= \frac{S_1 |X_1|^{-\alpha}}{\sum\limits_{X_i \in \Phi \backslash\{X_1\}}  \rho_i S_i |X_i|^{-\alpha}}.
\end{align}
where $\rho_i = 2^{\frac{B_i}{N-1}}$ for $\mathrm{BS}_{i} \in \mathcal{B}$ and $\rho_i=1$ otherwise.

\begin{theorem} \label{ThmSIR1tier}
In $1$-tier cellular networks, the CDF of the end-user SIR $\gamma_{\mathcal{B}}$ in (\ref{EquMetric}) is bounded as
\begin{align} \label{EquThmSIR1tier}
\mathbb{P} (\gamma_{\mathcal{B}} \leq \beta)
\begin{cases}
\leq \frac{\beta}{N-L-1} \sum\limits_{i=2}^{\infty} \rho_{i} \frac{\Gamma\left(1+\frac{\alpha}{2}\right)(i-1)!}{\Gamma\left(i+\frac{\alpha}{2}\right)} & \\
\geq 1-\exp\left\{-\Gamma(1+2/\alpha) \left[\frac{(N-L)\Gamma(1-\alpha/2)}{\beta \left[3^{-\alpha}\rho_{\min} + (2l+3)^{-\alpha} (1-\rho_{\min})  \right]}\right]^{-2/\alpha} \right\} & \\
\end{cases}
\end{align}
where $l=|\mathcal{B}|$ is the cardinality of the set $\mathcal{B}$, $\rho_{\min} = \min\limits_{\mathrm{BS}_i \in \mathcal{B}} \{\rho_i\}$, and  
\begin{align}
\rho_i &= 
\begin{cases}
1 & \mathrm{BS}_i \notin \mathcal{B} \\
2^{-\frac{B_i}{N-1}} & \mathrm{BS}_i \in \mathcal{B}\\
\end{cases}
\end{align}

\end{theorem}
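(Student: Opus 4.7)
The SIR $\gamma_\mathcal{B}$ in~(\ref{Equ1tierSIR}) mixes a chi-square signal term, independent exponential interferer fadings, and the order-statistic distances of the PPP $\Phi$, so a closed-form CDF is unlikely; I would aim directly at the two bounds, treating them separately.

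For the upper bound, I would apply Markov's inequality to the reciprocal: $\mathbb{P}(\gamma_\mathcal{B}\le\beta)=\mathbb{P}(1/\gamma_\mathcal{B}\ge 1/\beta)\le \beta\,\mathbb{E}[1/\gamma_\mathcal{B}]$. Mutual independence of $S_1$, $\{S_i\}_{i\ge 2}$ and $\Phi$ lets me factor
\begin{equation*}
\mathbb{E}[1/\gamma_\mathcal{B}] = \mathbb{E}[1/S_1]\sum_{i\ge 2}\rho_i\,\mathbb{E}[S_i]\,\mathbb{E}\bigl[(|X_1|/|X_i|)^\alpha\bigr].
\end{equation*}
The inverse chi-square moment gives $\mathbb{E}[1/S_1]=1/(2(N-L-1))$ and $\mathbb{E}[S_i]=2$, so the prefactor collapses to $1/(N-L-1)$. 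For the distance ratio I would use the standard PPP order-statistic facts that $\pi\lambda|X_i|^2\sim\Gamma(i,1)$ and $|X_1|^2/|X_i|^2\sim\mathrm{Beta}(1,i-1)$; the Beta moment of order $\alpha/2$ then evaluates to $\Gamma(1+\alpha/2)(i-1)!/\Gamma(i+\alpha/2)$, and collecting constants reproduces the stated upper bound.

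For the lower bound I would build a stochastic upper bound $\tilde\gamma\ge\gamma_\mathcal{B}$ depending essentially only on $|X_1|$, so that $\mathbb{P}(\gamma_\mathcal{B}\le\beta)\ge\mathbb{P}(\tilde\gamma\le\beta)$ reduces to the PPP void probability $1-\exp(-\pi\lambda r_0^2)$ for an appropriate threshold $r_0=r_0(\beta)$. The construction would (i) replace $S_1$ by a mean-type surrogate proportional to $N-L$ via a Jensen/concentration step, (ii) split the $L$ coordinated interferers into those with effective gain $\rho_{\min}$ (the set $\mathcal{B}$, with $|\mathcal{B}|=l$) and those with gain $1$ (the set $\mathcal{B}_L\setminus\mathcal{B}$), and (iii) replace their random distances by deterministic multiples of $|X_1|$, where the factors $3$ and $2l+3$ enter as conservative surrogates reflecting the PPP inter-point spacing. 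The tail of interferers outside $\mathcal{B}_L$ is then handled by a Laplace-functional-type computation over $\Phi$, which contributes the $\Gamma(1+2/\alpha)$ (and $\Gamma(1-2/\alpha)$-type) constants. Solving for $r_0$ in the event $\{\tilde\gamma\le\beta\}$ and applying the void probability of $|X_1|$ produces the stated exponential lower bound.

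The upper bound is essentially routine once the Beta-moment identity for PPP order statistics is in hand. The real obstacle is the lower bound: the art lies in simultaneously choosing deterministic distance surrogates for the interferers in $\mathcal{B}_L$, handling the chi-square $S_1$ without losing a factor of $N-L$, and controlling the PPP tail so that the combined inequality $\tilde\gamma\ge\gamma_\mathcal{B}$ remains both valid and tight enough to be non-vacuous. Matching the precise constants $3$, $2l+3$, and the gamma-function factors in the theorem will require this replacement step to be carried out explicitly, and is where I would expect the main technical effort to lie.
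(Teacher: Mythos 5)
Your upper bound is correct and follows the paper's own route essentially verbatim: Markov's inequality applied to $1/\gamma_{\mathcal{B}}$, independence of $S_1$, $\{S_i\}$ and $\Phi$ to factor the expectation, $\mathbb{E}[1/S_1]\,\mathbb{E}[S_i]=1/(N-L-1)$ (your normalization differs from the paper's but the product agrees), and the observation that $\{|X_i|^2\}$ is a one-dimensional PPP so that $|X_1|^2/|X_i|^2\sim\mathrm{Beta}(1,i-1)$, whose $\alpha/2$ moment gives $\Gamma(1+\alpha/2)(i-1)!/\Gamma(i+\alpha/2)$. This is exactly the computation in the paper's Appendix B.

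The lower bound, however, is where you have only an outline, and the two steps you defer are precisely the substance of the proof. First, the constants $3$ and $2l+3$ do not come from replacing interferer distances by deterministic multiples of $|X_1|$; they come from a stochastic-dominance claim about the reduced shot noise. Writing $I_{(m)}$ for the interference with the $m$ nearest BSs removed, the paper first lower-bounds $I_{\mathcal{B}}\succeq \rho_{\min}I_{(1)}+(1-\rho_{\min})I_{(l+1)}$ (after replacing the set $\mathcal{B}$ by the $l$ \emph{nearest} interferers, which only decreases the interference), and then proves $I_{(m)}\succeq(1+2m)^{-\alpha}I_{(0)}$. That last inequality is obtained by factoring $|X_i|^{-\alpha}=(1+|X_m|/(|X_i|-|X_m|))^{-\alpha}(|X_i|-|X_m|)^{-\alpha}$, bounding the prefactor via the expected order statistics $\mathbb{E}[|X_m|]=(\lambda\pi)^{-1/2}\Gamma(m+1/2)/(m-1)!$ (whence $1+2m$), and arguing that translating the points $\{X_{m+1},X_{m+2},\ldots\}$ toward the origin by $|X_m|$ produces a denser point process whose shot noise dominates $I_{(0)}$. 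Setting $m=1$ and $m=l+1$ yields $3^{-\alpha}$ and $(2l+3)^{-\alpha}$. None of this is visible in your sketch, and it is not a routine "conservative surrogate" step.

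Second, your reduction to "the void probability of $|X_1|$" does not produce the stated bound: an SIR surrogate $\tilde\gamma$ that is decreasing in $|X_1|$ gives $\mathbb{P}(\tilde\gamma\le\beta)=\mathbb{P}(|X_1|\ge r_0)=e^{-\pi\lambda r_0^2}$, which has the wrong form. What the paper actually uses is a lower bound on $\mathbb{P}\bigl(S_1|X_1|^{-\alpha}\le\beta\,c\,I_{(0)}\bigr)$ of the form $1-\mathbb{E}[e^{-A(Z)}]$ with $A(Z)=\pi\lambda\Gamma(1+2/\alpha)Z^{-2/\alpha}$ (imported from Theorem 1 of the cited reference; in effect the Fr\'echet-type distribution of the \emph{maximum} single interference term $\max_i S_i|X_i|^{-\alpha}$, which lower-bounds $I_{(0)}$), followed by Jensen's inequality to replace $Z$ by $\mathbb{E}[Z]=(N-L)(\pi\lambda)^{\alpha/2}\Gamma(1-\alpha/2)/(\beta[\cdots])$. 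Your instinct that a Laplace-functional or extreme-value computation over the marked PPP supplies the $\Gamma(1+2/\alpha)$ factor is right, but as written your argument neither establishes the dominance chain that delivers the bracket $3^{-\alpha}\rho_{\min}+(2l+3)^{-\alpha}(1-\rho_{\min})$ nor the inequality that converts the full interference into a tractable exponential tail; both must be supplied before the lower bound can be considered proved.
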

\begin{proof}
We denote the normalized interference (normalized by $P$) as
\begin{align} \label{EquInterf1tier}
I_{\mathcal{B}}&\stackrel{\triangle}{=} \sum\limits_{X_i \in \Phi \setminus \{X_1\}} \rho_i S_i |X_i|^{-\alpha} =\sum\limits_{\mathrm{BS}_i \in \mathcal{B}} \rho_i S_i |X_i|^{-\alpha} + \sum_{\mathrm{BS}_i \in \Phi \backslash\{\mathrm{BS}_1 \bigcup\mathcal{B}\}} S_i |X_i|^{-\alpha}.
\end{align}
The last equality comes from the definition of $\mathcal{B}$.  

\textbf{Upper Bound:} For a PPP $\Phi=\{X_1,X_2,
\ldots\}$ in $\mathbb{R}^2$ with an arbitrary density $\lambda$, $\{|X_1|^2, |X_2|^2, \ldots\}$ form a one dimensional PPP with intensity $\pi \lambda$. We thus have 
\begin{align} \label{EquForUB1tier}
\mathbb{E}\left[\frac{|X_1|^\alpha}{|X_i|^\alpha}\right] = \mathbb{E}\left[\left(\frac{|X_1|^2}{|X_i|^2}\right)^{\alpha/2}\right]=\frac{\Gamma\left(1+\frac{\alpha}{2}\right)(i+1)!}{\Gamma\left(i+\frac{\alpha}{2}\right)},
\end{align}
where the last equality holds from Appendix \ref{ProofThmSIR1tierUB}. The upper bound on the SIR CDF is
\begin{align} \label{EquUB1tier}
\mathbb{P} \left(\gamma = \frac{S_1 |X_1|^{-\alpha}}{I_{\mathcal{B}}}\leq \beta\right) 
=\mathbb{P} \left(\frac{I_{\mathcal{B}}}{S_1 |X_1|^{-\alpha}} \geq \frac{1}{\beta}\right) \stackrel{(a)}{\leq} \beta \mathbb{E} \left[\frac{1}{S_1} I_{\mathcal{B}} |X_1|^{\alpha}\right] 
= \beta \mathbb{E}\left[\frac{1}{S_1}\right] \sum\limits_{i=2}^{\infty} \rho_i \mathbb{E}[S_i] \mathbb{E}\left[\frac{|X_1|^{\alpha}}{|X_i|^{\alpha}}\right]
\end{align}
where (a) follows from Markov's inequality. As elaborated below (\ref{EquGeneralSIR}), we have $\mathbb{E}\left[\frac{1}{S_1}\right]=\frac{1}{N-L-1}$ (because $S_1\sim \chi^2_{2N-2L}$) and $\mathbb{E}[S_i]=1$ (because $S_i\sim \chi^2_2$) for $i\geq2$. Therefore  Combing (\ref{EquForUB1tier}) and (\ref{EquUB1tier}) gives the CDF upper bound in Theorem \ref{ThmSIR1tier}.

\textbf{Lower Bound:} Let $I_{(m)}= \sum_{X_i \in \Phi \backslash\{X_1,\ldots,X_m\}} S_i |X_i|^{-\alpha}$. In other words, $I_{(m)}$ is the sum interference experienced by the end-user, if the nearest $m$ BSs (including the serving BS) are removed or turned off. Note that $I_{(0)}$ means that the end-user is not associated with any BS, and experiences interference from all the BSs in the PPP $\Phi$. Denote $l=|\mathcal{B}|$ as the cardinality of $\mathcal{B}$ and $\rho_{\min}=\min_{\mathrm{BS}_i \in \mathcal{B}} \{\rho_i\}$. The actual interference $I_\mathcal{B}$ can be expressed as a function of $I_{(m)}$
\begin{align} \label{EquItoIm}
I_\mathcal{B}
&\geq \sum\limits_{\mathrm{BS}_i \in \mathcal{B}} \rho_{\min} S_i |X_i|^{-\alpha} + \sum_{X_i \in \Phi \backslash\{X_1 \bigcup \mathcal{B}\}} S_i |X_i|^{-\alpha} \nonumber \\
& \stackrel{(b)}{\succeq} \sum\limits_{\mathrm{BS}_i \in \{\mathrm{BS}_{2},\ldots,\mathrm{BS}_{l+1}\}} \rho_{\min} S_i |X_i|^{-\alpha} + \sum_{X_i \in \Phi \backslash\{X_1, \ldots, X_{l+1} \}} S_i |X_i|^{-\alpha} \nonumber \\
& =\rho_{\min} \sum\limits_{X_i \in \Phi \setminus \{X_1\}} S_i |X_i|^{-\alpha} + (1-\rho_{\min}) \sum\limits_{X_i \in \Phi \setminus \{X_1,\ldots,X_{l+1}\}} S_i |X_i|^{-\alpha} \nonumber \\
&=\rho_{\min} I_{(1)} + (1-\rho_{\min}) I_{(l+1)} \nonumber \\
&\stackrel{(c)}{\succeq} \left[3^{-\alpha}\rho_{\min} + (2l+3)^{-\alpha} (1-\rho_{\min}) \right] I_{(0)}
\end{align}
where $\succeq$ means stochastic dominance. In the right hand side of (b), we assume the strongest $l$ interfering BSs are cancelling their interference, instead of the $l$ BSs in the set of $\mathcal{B}$. Therefore it is a lower bound on the interference $I_\mathcal{B}$. (c) holds from Appendix \ref{ProofThmSIR1tierLB} deriving the lower bound of an arbitrary $I_{(m)}$.

Based on the lower bound on interference $I$, the SIR CDF can be lower bounded as follows
\begin{align} \label{EquLB1tier}
\mathbb{P} \left(\gamma=\frac{S_1 |X_1|^{-\alpha}}{I_\mathcal{B}}\leq \beta\right)
\geq & \mathbb{P} \left(\frac{S_1 |X_1|^{-\alpha}}{\left[3^{-\alpha}\rho_{\min} + (2l+3)^{-\alpha} (1-\rho_{\min}) \right] I_{(0)}} \leq \beta\right)
\stackrel{(d)}{\geq} & 1-\mathbb{E}\left[e^{-A(Z)}\right]
\end{align}
where $Z\stackrel{\triangle}{=} \frac{S_1 |X_1|^{-\alpha}}{\beta \left[3^{-\alpha}\rho_{\min} + (2l+3)^{-\alpha} (1-\rho_{\min})  \right]}$, $A(Z)=\pi \lambda \Gamma(1+\frac{2}{\alpha})Z^{-\frac{2}{\alpha}}$, and (d) holds from Theorem 1 in \cite{LiuAnd11}. The CDF lower bound is then derived as
\begin{align} \label{EquLBsequel1tier}
\mathbb{P} (\gamma_\mathcal{B} \leq \beta) & \geq 1- \mathbb{E}\left[e^{-A(Z)}\right] \nonumber \\
&\stackrel{(e)}{\geq} 1-\exp\left\{-\pi \lambda \Gamma(1+2/\alpha)(\mathbb{E}[Z])^{-\frac{2}{\alpha}}\right\} \nonumber\\
&=1-\exp\left\{-\Gamma(1+2/\alpha) \left[\frac{(N-L)\Gamma(1-\alpha/2)}{\beta \left[3^{-\alpha}\rho_{\min} + (2l+3)^{-\alpha} (1-\rho_{\min})  \right]}\right]^{-2/\alpha} \right\},
\end{align}
where (e) holds from Jensen's inequality.

\end{proof}

Based on the results in Theorem \ref{ThmSIR1tier} for the $1$-tier networks, we now characterize the end-user's SIR $\gamma_{\mathcal{B}}$ in a general $K$-tier HCN. 
\begin{theorem}\label{ThmSIRKtier}
In a general $K$-tier heterogeneous cellular network, the CDF of the end-user SIR $\gamma_{\mathcal{B}}$ in (\ref{EquMetric}) is bounded as
\begin{equation}\label{EquThmSIRKtier}
\mathbb{P}\left[\gamma_{\mathcal{B}} \leq \beta\right]
\begin{cases} 
\leq 
\frac{\beta}{N_{k^\star}-L-1} \left[\sum\limits_{i=2}^{\infty}  \rho_{i,k^\star} \frac{\Gamma\left(1+ \frac{\alpha_{k^\star}}{2} \right)(i-1)!}{\Gamma\left(i+\frac{\alpha_{k^\star}}{2}\right)}+ \sum\limits_{\begin{subarray}{c}
k=1\\
k \neq k^\star
\end{subarray}}^K \frac{P_k}{P_{k^\star}} \sum\limits_{i=1}^{\infty} \rho_{i,k} \frac{(\lambda_{k^\star} \pi)^{-\frac{\alpha_{k^\star}}{2}}\Gamma\left(1+ \frac{\alpha_{k^\star}}{2} \right)(i-1)!}{(\lambda_k \pi)^{\frac{-\alpha_k}{2}}\Gamma\left(i+\frac{\alpha_k}{2}\right)}\right]\\
\geq 1-\exp\left[-\left(\pi\hat{\lambda}\right)^{1-\frac{\alpha_{k^\star}}{\alpha_{\max}}}\Gamma\left(1+\frac{2}{\alpha_{\max}}\right)\left(\frac{(N_{k^\star}-L) \Gamma \left(1- \frac{\alpha_{k^\star}}{2}\right)}{\beta[3^{-\alpha_{\max}}\rho_{\min}+(2
l+3)^{-\alpha_{\max}}(1-\rho_{\min})]}\right)^{-\frac{2}{\alpha_{\max}}}\right]\\
\end{cases}
\end{equation}
where $l=|\mathcal{B}|$, $\hat{\lambda}= \sum\limits_{k=1}^{K} \lambda_k \left(P_k/P_{k^\star}\right)^{\frac{2}{\alpha_k}}$, $\alpha_{\max}\triangleq \max\{\alpha_1,\ldots,\alpha_K\}$, $\rho_{\min}=\min\limits_{\mathrm{BS}_{i,k} \in \mathcal{B}} \{\rho_{i,k}\}$, and 
\begin{align}
\rho_{i,k} = 
\begin{cases}
1 & \mathrm{BS}_{i,k} \notin \mathcal{B} \\
2^{-\frac{B_{i,k}}{N_k-1}} & \mathrm{BS}_{i,k} \in \mathcal{B} \\
\end{cases}
\end{align}
\end{theorem}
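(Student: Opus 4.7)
The plan is to mirror the two-pronged argument of Theorem~\ref{ThmSIR1tier}, splitting the interference by tier and reducing the heterogeneous PPP either to independent moment calculations (for the upper bound) or to a single equivalent PPP via the mapping theorem (for the lower bound).

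For the upper bound, I would first decompose the normalized interference $I_\mathcal{B}/P_{k^\star}$ into an intra-tier sum over $\Phi_{k^\star}\setminus\{X_{1,k^\star}\}$ and cross-tier sums over $\Phi_k$ for $k\neq k^\star$ scaled by $P_k/P_{k^\star}$, and then apply Markov's inequality exactly as in step (a) of (\ref{EquUB1tier}), reducing the CDF to $\beta\,\mathbb{E}[1/S_{1,k^\star}]\cdot\mathbb{E}[|X_{1,k^\star}|^{\alpha_{k^\star}}\cdot I_\mathcal{B}/P_{k^\star}]$. The intra-tier piece is handled verbatim by (\ref{EquForUB1tier}), producing the first summation in the bound. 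For the cross-tier pieces, I would drop the minimum-distance exclusion on each $\Phi_k$ imposed by the strongest-mean-power association rule — an operation that only inflates the interference moment and hence preserves the Markov direction — after which $|X_{1,k^\star}|^2$ and $|X_{i,k}|^2$ become independent, Exponential with rate $\pi\lambda_{k^\star}$ and Erlang of shape $i$ and rate $\pi\lambda_k$ respectively. Factoring the moments then yields the prefactor $(\lambda_{k^\star}\pi)^{-\alpha_{k^\star}/2}\Gamma(1+\alpha_{k^\star}/2)$ multiplied by the $i$-indexed Gamma ratio, and the overall $1/(N_{k^\star}-L-1)$ comes from $\mathbb{E}[1/S_{1,k^\star}]$ since $S_{1,k^\star}\sim\chi^2_{2N_{k^\star}-2L}$.

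For the lower bound, I would first invoke the PPP mapping theorem via $x\mapsto (P_k/P_{k^\star})^{1/\alpha_k}\|x\|$ to collapse the $K$ tiers into a single equivalent tier-$k^\star$ PPP of density $\hat\lambda=\sum_k\lambda_k(P_k/P_{k^\star})^{2/\alpha_k}$, and then uniformize the per-tier exponents to $\alpha_{\max}$ inside the interference terms. This substitution is legitimate in the relevant regime because raising the exponent on $|X|^{-\alpha}$ shrinks each interferer's contribution, giving a valid lower bound on interference and hence a valid lower bound on the CDF. Replaying the chain (\ref{EquItoIm}) on this equivalent process yields $I_\mathcal{B}\succeq[3^{-\alpha_{\max}}\rho_{\min}+(2l+3)^{-\alpha_{\max}}(1-\rho_{\min})]\,I_{(0)}$ with $I_{(0)}$ evaluated on the $(\hat\lambda,\alpha_{\max})$ process. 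Theorem~1 of \cite{LiuAnd11} then delivers the Laplace-transform representation $\mathbb{E}[e^{-A(Z)}]$ with $A(Z)=\pi\hat\lambda\,\Gamma(1+2/\alpha_{\max})\,Z^{-2/\alpha_{\max}}$, and Jensen's inequality as in step (e) of (\ref{EquLBsequel1tier}) closes the argument; the $(N_{k^\star}-L)\,\Gamma(1-\alpha_{k^\star}/2)$ appears because the desired-signal path-loss exponent is still $\alpha_{k^\star}$ and $S_{1,k^\star}\sim\chi^2_{2N_{k^\star}-2L}$, while the outer $2/\alpha_{\max}$ and $(\pi\hat\lambda)^{1-\alpha_{k^\star}/\alpha_{\max}}$ factors arise from the interference-side uniformization.

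The main obstacle in both directions is the cross-tier reduction. For the upper bound I must justify that erasing the cell-association exclusion on each $\Phi_k$ with $k\neq k^\star$ only inflates $\mathbb{E}\bigl[|X_{1,k^\star}|^{\alpha_{k^\star}}/|X_{i,k}|^{\alpha_k}\bigr]$ so that the independent-moments factorization still upper-bounds the Markov quantity. For the lower bound, the heterogeneity of the $\{\alpha_k\}$ breaks the clean deterministic distance bound $|X_{m+n}|\geq(2n+1)|X_{m+1}|$ used in Appendix~\ref{ProofThmSIR1tierLB}; combining the correct direction of the inequality when each $\alpha_k$ is replaced by $\alpha_{\max}$, while simultaneously keeping $\alpha_{k^\star}$ on the desired-signal side, is the delicate bookkeeping step that produces the particular mixture of exponents appearing in the final lower bound.
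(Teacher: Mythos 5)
Your proposal follows essentially the same route as the paper's proof: Markov's inequality plus the factorization of $\mathbb{E}\bigl[|X_{1,k^\star}|^{\alpha_{k^\star}}/|X_{i,k}|^{\alpha_k}\bigr]$ across independent tiers for the upper bound, and the conservation (mapping) property to collapse the $K$ tiers into one PPP of density $\hat{\lambda}$, followed by uniformizing all exponents to $\alpha_{\max}$, replaying the single-tier dominance chain, and closing with Theorem~1 of \cite{LiuAnd11} and Jensen for the lower bound. The only difference is that you explicitly flag the need to discard the cell-association exclusion region for the cross-tier moments, a point the paper's proof passes over silently; otherwise the arguments coincide.
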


\begin{proof}
In the $K$-tier HCN, the normalized interference $I_{\mathcal{B}}$ can be written as
\begin{align} \label{EquInterfKtier}
I\stackrel{\triangle}{=} \sum\limits_{k=1}^{K}\sum_{X_{i,k} \in \Phi_k \setminus \{ X_{1,k^\star}\}} \frac{P_k}{P_{k^\star}} \rho_{i,k} S_{i,k}|X_{i,k}|^{-\alpha_k},
\end{align}
where $\rho_{i,k}= 2^{\frac{B_{i,k}}{N_k-1}}$ for $\mathrm{BS}_{i,k} \in \mathcal{B}_L$ and $\rho_{i,k}=1$ otherwise.

\textbf{Upper Bound.} Similar to the proof for the $1$-tier case, the upper bound of SIR CDF is 
\begin{align}
\mathbb{P} \left(\gamma=\frac{S_{1,k^\star}|X_{1,k^\star}|^{-\alpha_{k^\star}}}{I_\mathcal{B}} \leq \beta\right)  
\leq \frac{\beta}{N-L-1} \left[\sum\limits_{i=2}^{\infty} \rho_{i,k} \mathbb{E}\left[\frac{|X_{1,k^\star}|^{\alpha_{k^\star}}}{|X_{i,k^\star}|^{\alpha_k^\star}}\right]+ \sum\limits_{\begin{subarray}{c}
k=1\\
k\neq k^\star
\end{subarray}}^{K} \sum\limits_{i=1}^{\infty}\frac{P_k}{P_{k^\star}} \rho_{i,k} \mathbb{E}\left[\frac{|X_{1,k^\star}|^{\alpha_{k^\star}}}{|X_{i,k}|^{\alpha_k}}\right]\right].
\end{align}
Following the same steps in Appendix \ref{ProofThmSIR1tierUB} we have
\begin{align}
\mathbb{E}\left[\frac{|X_{1,k^\star}|^{\alpha_{k^\star}}}{|X_{i,k^\star}|^{\alpha_{k^\star}}}\right]=\frac{\Gamma\left(1+ \frac{\alpha_{k^\star}}{2} \right)(i-1)!}{\Gamma\left(i+\frac{\alpha_{k^\star}}{2}\right)}
\end{align} 
For $k\neq k^\star$, $X_{i,k}$ and $X_{1,k^\star}$ belong to two independent PPPs. Therefore we have
\begin{align}\label{EquDifferentK}
\mathbb{E}\left[\frac{|X_{1,k^\star}|^{\alpha_{k^\star}}}{|X_{i,k}|^{\alpha_{k}}}\right] =\frac{\mathbb{E}\left[|X_{1,k^\star}|^{\alpha_{k^\star}}\right]}{\mathbb{E}\left[|X_{i,k}|^{\alpha_{k}}\right]}= \frac{(\lambda_{k^\star} \pi)^{-\frac{\alpha_{k^\star}}{2}}\Gamma\left(1+ \frac{\alpha_{k^\star}}{2} \right)(i-1)!}{(\lambda_k \pi)^{\frac{-\alpha_k}{2}}\Gamma\left(i+\frac{\alpha_k}{2}\right)}
\end{align}
Therefore the upper bound in Theorem \ref{ThmSIRKtier} is proven.

\textbf{Lower Bound.} The key idea of this proof is converting the interference from $K$ tiers to that of a single tier. Then we apply the lower bound from Theorem \ref{ThmSIR1tier}. Comparing (\ref{EquInterfKtier}) with (\ref{EquInterf1tier}), it is seen that the $K$-tier case is different from the $1$-tier case in two important aspects: 1) BSs from different tiers have different powers; and 2) BSs from different tiers have different path loss exponents. 

We first present the way of eliminating the power differences. The interference from the $k^{th}$ tier is
\begin{align}
I_k = \sum\limits_{X_{i,k} \in \Phi_k} P_k \rho_{i,k} S_{i,k} |X_{i,k}|^{-\alpha_k} =\sum\limits_{X_{i,k} \in \Phi_k} P_k^\star \rho_{i,k} S_{i,k} \left|\frac{X_{i,k}}{\left(P_k/P_{k^\star}\right)^{\frac{1}{\alpha_k}}}\right|^{-\alpha_k} = \sum\limits_{Y_{i,k} \in \hat{\Phi}_k} P_k^\star \rho_{i,k} S_{i,k} \left|Y_{i,k}\right|^{-\alpha_k}
\end{align}
where $Y_{i,k} \stackrel{\triangle}{=} \frac{X_{i,k}}{\left(P_k/P_{k^\star}\right)^{\frac{1}{\alpha_k}}}$. The conservation property in \cite{DSWKJM96,LiuAnd11} states that $\{Y_{i,k}, i \in \mathbb{N}\}$ form a new PPP $\hat{\Phi}_k$ with intensity $\hat{\lambda}_k=\lambda_k \left(P_k/P_{k^\star}\right)^{\frac{2}{\alpha_k}}$. Therefore the interference can be viewed as generated from the $K$ new tiers $\{\hat{\Phi}_1, \ldots,\hat{\Phi}_K\}$ with the same transmitting power $P_{k^\star}$. Therefore, the normalized interference in (\ref{EquInterfKtier}) can be rewritten as
\begin{align}
I_\mathcal{B} \stackrel{d.}{=} \sum\limits_{k=1}^{K} \sum\limits_{Y_{i,k} \in \hat{\Phi}_k \setminus \{Y_{1,k^\star}\}} \rho_{i,k} S_{i,k} |Y_{i,k}|^{-\alpha_k},
\end{align}
where $\stackrel{d.}{=}$ means equivalence in distribution.

We then set all the path loss exponents to a common value $\alpha_{\max}\stackrel{\triangle}{=} \max (\alpha_1, \ldots, \alpha_K)$, which is the best case since the interference attenuates faster. The normalized interference in (\ref{EquInterfKtier}) can be lower bounded as
\begin{align}
I_\mathcal{B} \geq \sum\limits_{k=1}^{K} \sum\limits_{Y_{i,k} \in \hat{\Phi}_k \setminus \{Y_{1,k^\star}\}} \rho_{i,k} S_{i,k} |Y_{i,k}|^{-\alpha_{\max}} \stackrel{\triangle}{=} I_{\mathcal{B}}^{lb}.
\end{align}

Denote $\hat{\Phi}= \bigcup\limits_{k=1}^K \hat{\Phi}_k$. Because $\{\hat{\Phi}_1,\ldots,\hat{\Phi}_K\}$ are independent PPPs, $\hat{\Phi}$ is also a PPP with intensity $\hat{\lambda}= \sum\limits_{k=1}^{K} \hat{\lambda}_k$. The interference lower bound $I^{lb}_{\mathcal{B}}$ can be viewed as generated from a single tier where 1) the BS locations forms a PPP $\hat{\Phi}$ with intensity $\hat{\lambda}$; 2) BSs have the same transmitting power $P_{k^\star}$; and 3) BSs have the same path loss exponent $\alpha_{\max}$. Therefore, $I_{\mathcal{B}}^{lb}$ can be rewritten as
\begin{align}
I_{\mathcal{B}}^{lb} &=\sum\limits_{X_{i,k} \in \hat{\Phi} \setminus \{X_{1,k^\star}\}}  \rho_{i,k} S_{i,k} |Y_{i,k}|^{-\alpha_{\max}} \nonumber \\
&=\sum\limits_{\mathrm{BS}_{i,k} \in \mathcal{B}} \rho_{i,k} S_{i,k} |Y_{i,k}|^{-\alpha_{\max}} + \sum\limits_{\mathrm{BS}_{i,k} \in \hat{\Phi} \setminus \{\mathrm{BS}_{1,k^\star} \bigcup \mathcal{B}\}} S_{i,k} |Y_{i,k}|^{-\alpha_{\max}}.
\end{align}
It is obvious now that $I_{\mathcal{B}}^{lb}$ is in the same form as $I_{\mathcal{B}}$ in (\ref{EquInterf1tier}).

Applying (\ref{EquLB1tier}) and (\ref{EquLBsequel1tier}), we have the lower bound on the end-user SIR
\begin{align}
\mathbb{P}\left(\gamma = \frac{S_{1,k^\star} |X_{1,k^\star}|^{-\alpha_{k^\star}}}{I_\mathcal{B}} \leq \beta \right) & \geq  \mathbb{P} \left(\frac{S_{1,k^\star} |X_{1,k^\star}|^{-\alpha_{k^\star}}}{I_{\mathcal{B}}^{lb}} \leq \beta\right) \nonumber \\
&\geq  1-\exp\left[-\pi\hat{\lambda}\Gamma\left(1+\frac{2}{\alpha_{\max}}\right)(\mathbb{E}[\hat{Z}])^{-\frac{2}{\alpha_{\max}}}\right], 
\end{align}
where $\hat{Z} = \frac{S_{1,k^\star}|X_{1,k^\star}|^{-\alpha_{k^\star}}}{\beta[3^{-\alpha_{\max}}\rho_{\min}+(2l+3)^{-\alpha_{\max}}(1-\rho_{\min})]}$, $l=|\mathcal{B}|$, and $\rho_{\min}=\min_{\mathrm{BS}_{i,k} \in \mathcal{B}} \{\rho_{i,k}\}$. From Lemma 1 and 3 in \cite{JoSanXiaAnd11}, we have $\mathbb{E}[|X_{1,k^\star}|^{-\alpha_{k^\star}}]=(\pi \hat{\lambda})^{\frac{\alpha_{k^\star}}{2}} \Gamma(1-\alpha_{k^\star}/2)$, which gives 
\begin{align}
\mathbb{E} [\hat{Z}] = \frac{(N_{k^\star}-L) (\pi \hat{\lambda})^{\frac{\alpha_{k^\star}}{2}} \Gamma \left(1- \frac{\alpha_{k^\star}}{2}\right)}{\beta[3^{-\alpha_{\max}}\rho_{\min}+(2l+3)^{-\alpha_{\max}}(1-\rho_{\min})]}.
\end{align}
Therefore the lower bound in Theorem \ref{ThmSIRKtier} follows.

\end{proof}

The bounds in Theorem \ref{ThmSIR1tier} and \ref{ThmSIRKtier} are insightful as they show clear dependence on important parameters such as the number of BS antennas $N$, the path loss exponent $\alpha$ and the number of coordinated neighbouring BSs $L$. On the other hand, the bounds in $1$-tier network case are independent of the BS spatial density $\lambda$. This is often called \textit{scale-invariance} which is a known property of interference-limited cellular networks\cite{AndBacGan10,DhiGanBacAnd11,JoSanXiaAnd11}. These bounds on the end-user SIR $\gamma_{\mathcal{B}}$ can be used with the throughput evaluation framework in Theorem \ref{ThmFramework} to derive the bounds on the end-user throughput.

\section{Numerical Results and Discussion}
In this section, we simulate the CoMP ZFBF performance under both realistic and idealized overhead messaging. We consider a 3-tier heterogeneous cellular network comprising macro (tier 1), pico (tier 2) and femto (tier 3) BSs. We simulate the scenario that the serving BS is a macrocell BS, i.e. $k^\star=1$. Notation and system parameters are given in Table \ref{table1}.
\begin{table}[h]
\caption{notation $\&$ Simulation Summary} \centering
\begin{tabular}{c|c|c}
\hline \hline notation & Description & Simulation Value\\
\hline
\hline $K$ & The number of tiers in the HCN & $3$\\
\hline $k$ & The tier index & k=$1$ (macro), $2$ (pico), $3$ (femto)\\
\hline $k^\star$ & The tier index of the serving BS & $1$ (macro) \\
\hline $P_k$  & Transmitting powers of $k^{th}$-th tier & $P_1=40 W$, $P_2=2 W$, $P_3=0.2 W$ \\
\hline $\lambda_{k}$  & Spatial density of $k^{th}$ tier & $\lambda_1=10^{-6}/m^2$, $\lambda_2=10^{-5}/m^2$, $\lambda_3=10^{-4}/m^2$ \\
\hline $\alpha_k$  & Path loss exponent of $k^{th}$ tier & $\alpha_1=4.0$, $\alpha_{2}=3.5$, $\alpha_{3}=3.0$\\
\hline $N_k$ & The number of BS antennas in the $k$-th tier & $N_1=8$, $N_2=4$, $N_3=2$\\
\hline $\mathcal{D}_{i,k}$ & Overhead delay & Not fixed\\
\hline $\tau_{i,k}$ & Time fraction of $\mathrm{BS}_{i,k}$ in cooperation phase & Not fixed \\ 
\hline $\mathcal{T}_{i,k}$ & Channel fading block length of $\mathrm{BS}_{i,k}$ & Fixed length of $80$ ms\\
\hline $B_{i,k}$ & Overhead message quantization bits for $\mathrm{BS}_{i,k}$ & Not fixed \\
\hline $L$ & The number of coordinated cells & Not fixed\\
\hline $\mathcal{B}_L$ & The set of coordinated cells & Not fixed \\
\hline $\gamma_{\mbox{target}}$ & The target SIR used in (\ref{EquCoverage}) & $3$ dB \\
\hline $G$ & The Shannon gap used in (\ref{EquThroughput}) & $3$ dB \\
\hline\hline
\end{tabular} \label{table1}
\end{table}

Two main performance metrics for CoMP schemes (including CoMP ZFBF) are their improvements on network coverage and capacity. We simulate both performance metrics, by considering the different types of SIR-rate mapping functions $R(\cdot)$ in Theorem \ref{ThmFramework}.
\begin{enumerate}
\item (Coverage under CoMP ZFBF) If the end-user only requires a fixed target rate $\mathcal{R}_{\mbox{target}}$ (e.g. a voice user), its SIR-rate mapping function $R(\cdot)$ is
\begin{align} \label{EquCoverage}
R(\gamma) =
\begin{cases}
\mathcal{R}_{\mbox{target}} & \gamma \geq \gamma_{\mbox{target}} \\
0 & \gamma < \gamma_{\mbox{target}}\\
\end{cases}.
\end{align}
The throughput quantified in Theorem \ref{ThmFramework} is then simply the user's target rate times its probability of being in coverage (i.e. with SIR $\gamma$ larger than the target SIR $\gamma_{\mbox{target}}$). Therefore, we can quantify the coverage improvement from CoMP ZFBF by normalizing the derived throughput by $\mathcal{R}_{\mbox{target}}$. We use $\gamma_{\mbox{target}}=3$ dB in the simulations.
\item (Throughput under CoMP beamforming) If the end-user is a data-greedy user, its SIR-rate mapping function $R(\cdot)$ is 
\begin{align} \label{EquThroughput}
R(\gamma)= \log_2\left(1+ \frac{\gamma}{G}\right) \quad \mbox{bps/Hz},
\end{align}
where $G$ is the Shannon gap, which is $3$ dB in our simulations. Quantifying the rate of users of this kind will show the throughput improvement from CoMP ZFBF.
\end{enumerate}

\subsection{The Configurations of Overhead Channel}
Regarding inter-cell overhead channels, we consider the scenario that the overhead messages are shared through the BSs' backhaul. In our previous work \cite{XiaJoAnd11}, the backhaul connection between two coordinated BSs is modelled as a tandem queue network consisting of several servers (e.g. switches, routers and gateways), each of which has exponential processing time. See \cite{XiaJoAnd11} for more details on this model. The limited processing rate from the backhaul servers inevitably introduces overhead delay, whose distribution $p(\mathcal{T}_{i,k},d)$ is derived in Theorem 1 in \cite{XiaJoAnd11} and used in the simulations. 

\begin{figure}[hp]
\centerline{
\includegraphics[width=3.5in]{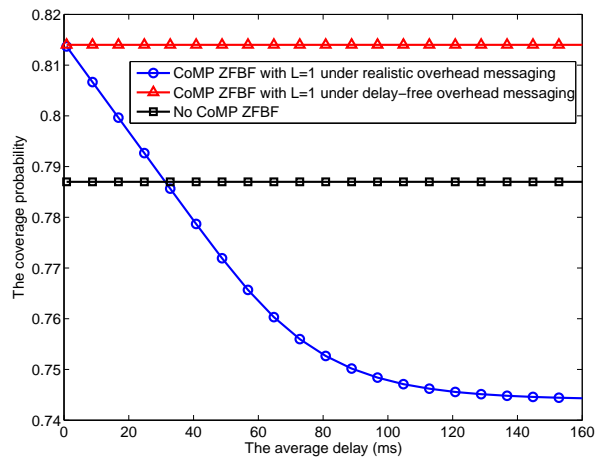}}
\caption{Downlink CoMP ZFBF coverage probability vs. the average overhead channel delay. The overhead bit size is $B_{i,k}=3(N_k-1)$, which gives $\rho_{i,k}=12.5\%$ (i.e. a coordinated $\mathrm{BS}_{i,k}$ can cancel $87.5\%$ of its interference once receiving the updated overhead). The number of coordinated cells is $L=1$.} \label{PicDelayVsCoverage}
\end{figure}

\begin{figure}[hp]
\centerline{
\includegraphics[width=3.5in]{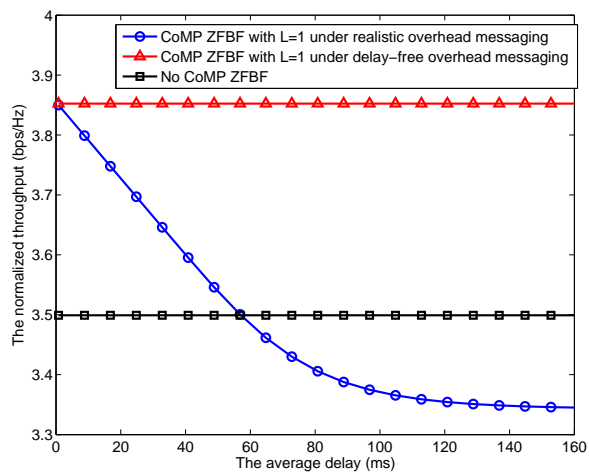}}
\caption{Downlink CoMP ZFBF throughput vs. the average overhead channel delay. The values of $L$ and $B_{i,k}$ are the same as Fig. \ref{PicDelayVsCoverage} (i.e. $B_{i,k}=3(N_k-1)$ and $L=1$).} \label{PicDelayVsRate}
\end{figure}

Fig. \ref{PicDelayVsCoverage} and \ref{PicDelayVsRate} show the impact of backhaul overhead delay on the CoMP ZFBF coverage and throughput. It is seen that CoMP ZFBF coverage and throughput fall almost linearly as the average delay grows from zero (i.e. delay-free overhead channel as assumed in previous literature). In fact, when the overhead channel delay is larger than $60\%$ of the fading coherence time $\mathcal{T}_{i,k}$, CoMP ZFBF does not bring any coverage or throughput gain. This observation diverges significantly from previous optimistic performance predictions which ignore the overhead delay. It also provides a rule of thumb for the overhead channel configurations for CoMP ZFBF.  

\subsection{Choosing Coordinated Cells}
\begin{figure}[h]
\centerline{
\includegraphics[width=3.5in]{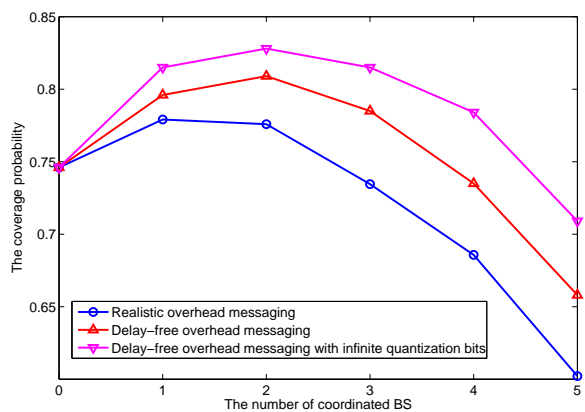}}
\caption{CoMP ZFBF coverage probability vs. the number of coordinated cells $L$. We use $B_{i,k}=3(N_k-1)$ to give $\rho_{i,k}=12.5\%$, i.e. a coordinated BS $\mathrm{BS}_{i,k}$ can cancel $87.5\%$ of its interference once receiving the overhead. For the overhead delay $\mathcal{D}_{i,k}$ between the serving cell and $\mathrm{BS}_{i,k}$, we adjust the servers' processing rates in their backhaul path, to make sure that the average overhead delay $\mathbb{E}[\mathcal{D}_{i,k}]=20$ ms\cite{HuaWei09}. } \label{PicLvsCoverage}
\end{figure}
\begin{figure}[h]
\centerline{
\includegraphics[width=3.5in]{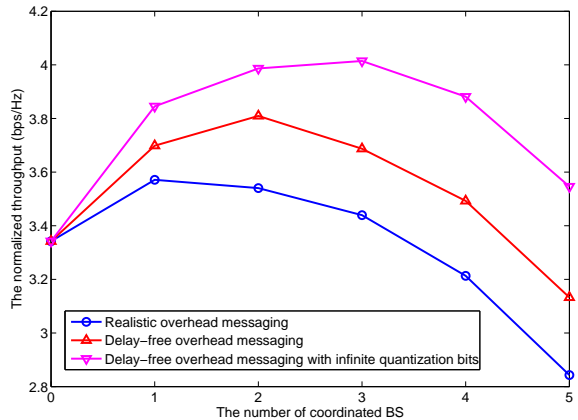}}
\caption{CoMP ZFBF throughput (bps/Hz) vs. the number of coordinated cells $L$. The configurations on $B_{i,k}$ and the backhaul channel are the same as Fig. \ref{PicLvsCoverage} (i.e. $B_{i,k}=3(N_k-1)$ and $\mathbb{E}[\mathcal{D}_{i,k}]=20$ ms). } \label{PicLvsRate}
\end{figure}
A fundamental design question for CoMP beamforming is how many and which neighbouring cells should be coordinated. Coordinating more cells may translates into less interference from other cells, but also weaker signal power for the end-user and heavier overhead messaging burden. Fig. \ref{PicLvsCoverage} and \ref{PicLvsRate} show that the number of coordinated cells should be kept fairly small in HCNs, even under ideal overhead model (i.e. no overhead delay and infinite quantization bits) and limited feedback overhead model (i.e. no overhead delay but finite quantization bits). This observation diverges significantly from previous work in conventional macrocell networks and implies that dominant interference comes from only a few neighboring cells in HCNs. Further, by considering both overhead delay and rate constraints, our work indicates that coordinating with only one other cell (i.e $L=1$) is actually optimal for a serving base station $\mathrm{BS}_{i,k^\star}$ with eight antennas. 

\begin{figure}[h]
\centerline{
\includegraphics[width=3.5in]{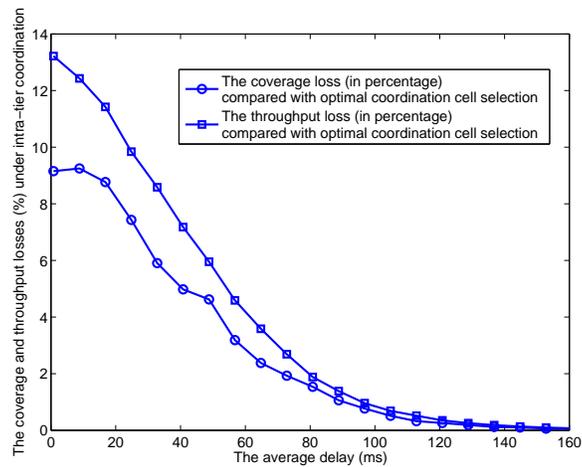}}
\caption{CoMP ZFBF coverage and throughput losses (in percentage) under intra-tier coordination vs. the average overhead channel delay. The values of $L$ and $B_{i,k}$ are the same as Fig. \ref{PicDelayVsCoverage} (i.e. $B_{i,k}=3(N_k-1)$ and $L=1$).} \label{PicDelayvsCoverageAndRate}
\end{figure}

When the serving base station $\mathrm{BS}_{1,k^\star}$ coordinates with $L$ other cells, we assume the optimal scenario that the $L$ strongest interfering cells (possibly coming from different tiers) are able to coordinate with the serving cell. In practice, cross-tier coordination may be restricted in HCNs and only intra-tier coordination is allowed. For example, the end-user installed femtocells are controlled by their owners, and may not be allowed to or capable of coordinating with macrocells\cite{And12JSAC}. Coordinating only with cells in the same tier (termed \emph{intra-tier coordination}) is of course sub-optimal and results in coverage and throughput losses, which are quantified in Fig. \ref{PicDelayvsCoverageAndRate} (w.r.t. different overhead delay profile) and \ref{PicLvsCoverageAndRate} (w.r.t. different values of $L$) for CoMP ZFBF. The performance loss is most significant (e.g. as high as $20\%$) in the practically important scenario of CoMP ZFBF -- that is, relatively large number of coordinated cells and/or small overhead channel delay (e.g. if the overhead channel is optimized as in \cite{Irmeretal11CoMPTrial}).
\begin{figure}[h]
\centerline{
\includegraphics[width=3.5in]{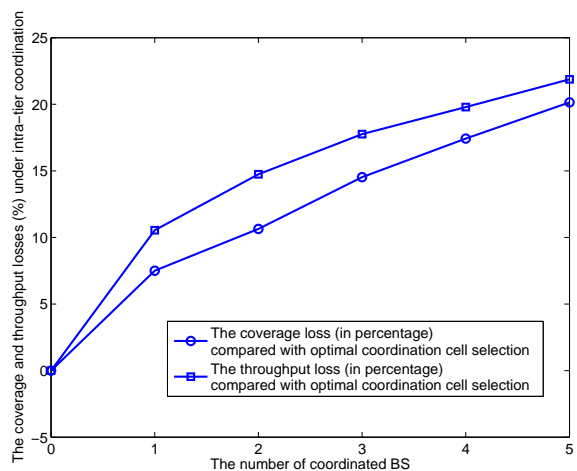}}
\caption{CoMP ZFBF coverage and throughput losses (in percentage) under intra-tier coordination vs. the number of coordinated cells $L$. The configurations on $B_{i,k}$ and the backhaul channel are the same as Fig. \ref{PicLvsCoverage} (i.e. $B_{i,k}=3(N_k-1)$ and $\mathbb{E}[\mathcal{D}_{i,k}]=20$ ms).} \label{PicLvsCoverageAndRate}
\end{figure}

\section{Conclusion}
This paper presents a novel approach to evaluate downlink CoMP schemes in the new network paradigm of HCNs, by developing a new framework to quantify the impact of overhead delay and using PPP model in end-user SINR characterization. This proposed approach can be used for a class of CoMP schemes, and is applied to CoMP ZFBF in this paper as an example. We show that CoMP ZFBF performance heavily depends on the overhead delay and its design should be fairly conservative (e.g. coordinating with only one or two other cells). These results align with the findings from several industrial implementations\cite{Qcomm10CTW,Irmeretal11CoMPTrial,Qcom12ITA}, and provide insights diverging significantly from previous work without overhead delay modeling.

In this paper, one fundamental assumption is the i.i.d. discrete time model of the cooperation-dependent parameters. Developing more complicated and accurate models is an important related topic for future works. Besides, the evaluation framework assumes the number of coordinated cells $L$ as a pre-determined parameter. The adaptive selection on $L$ as a function of instantaneous other-cell interference can potentially bring more CoMP gains, and should be considered in future CoMP evaluation in HCNs. Considering multiple user antennas in CoMP is also a related topic for future research.

\appendix
\subsection{Proof of Lemma \ref{LemmaTimeFraction}} \label{ProofLemmaTimeFraction}
The cooperation phase only occurs in fading blocks for which the overhead messaging delay $\mathcal{D}$ is smaller than $\mathcal{T}$. Here we omit the subscripts of delay $\mathcal{D}$ and block length $\mathcal{T}$ to keep the proof general. The percentage of these fading blocks is 
\begin{align}
\mathbb{P} (\mathcal{D} \leq \mathcal{T}) = \mathbb{P} (\mathcal{D}\leq \mathcal{T},\mathcal{D}\leq \infty) = p(\mathcal{T},\infty),
\end{align}
where the last equality holds by definition. In these fading blocks, the overhead messaging phase will have a time length $\mathbb{E}[\mathcal{D} | \mathcal{D}<\mathcal{T}]$
\begin{align}
\mathbb{E} [\mathcal{D} | \mathcal{D}<\mathcal{T}] = \int\limits_{0}^{\infty} \{ 1- \mathbb{P} (\mathcal{D} \leq s | \mathcal{D} \leq \mathcal{T})\} \mathrm{d}s.
\end{align}
Thus the average duration of the cooperation phase is $\mathbb{E} [\mathcal{T}] - \mathbb{E} [\mathcal{D}|\mathcal{D} \leq \mathcal{T}]$.

In sum, the long-term time fraction of the cooperation phase is 
\begin{align}
\eta & = p(\mathcal{T},\infty)\times \left(\frac{\mathbb{E} [\mathcal{T}] - \mathbb{E} [\mathcal{D}|\mathcal{D} \leq \mathcal{T}]}{\mathbb{E} [\mathcal{T}] }\right) \nonumber\\
&=p(\mathcal{T},\infty) - \frac{p(\mathcal{T},\infty)}{\mathbb{E} [\mathcal{T}]} \int\limits_{0}^{\infty} \{ 1- \mathbb{P} (\mathcal{D} \leq s | \mathcal{D} \leq \mathcal{T})\} \mathrm{d}s \nonumber \\
&\stackrel{(a)}{=} p(\mathcal{T},\infty)-\frac{1}{\mathbb{E}[\mathcal{T}]} \int\limits_{0}^{\infty} \left\{p(\mathcal{T},\infty)-p(\mathcal{T},s)\right\} \mathrm{d}s.
\end{align}
By definition, $p(\mathcal{T},\infty)=\mathbb{P} (\mathcal{D}\leq \mathcal{T},\mathcal{D} \leq \infty)=\mathbb{P} (\mathcal{D} \leq \mathcal{T})$. Therefore (a) follows.

\subsection{Auxiliary Result for the CDF Upper Bound in Theorem \ref{ThmSIR1tier} } \label{ProofThmSIR1tierUB}
Consider a one-dimensional PPP $\Phi =\{Y_1, Y_2, \ldots\}$ with intensity $\lambda$, we have 
\begin{align}
Y_i =\sum\limits_{j=1}^{i} \Delta_j
\end{align}
where $\Delta_j \sim \exp(\lambda)$. For an arbitrary positive number $\nu$, the following equality holds
\begin{align} 
\mathbb{E} \left[\frac{Y_1^\nu}{Y_i^{\nu}}\right] &=\mathbb{E} \left[\left(\frac{\Delta_1}{\Delta_1 + \Delta_2+\ldots\Delta_i}\right)^{\nu}\right]
=\mathbb{E}\left\{\mathbb{E} \left[\left(\frac{\Delta_1}{\Delta_1 + \Delta_2+\ldots\Delta_i}\right)^{\nu}\bigg{|}\Delta_1+\ldots\Delta_i=x\right]\right\} \nonumber \\
& =\mathbb{E} \left\{\frac{\Gamma\left(1+\nu\right)(i-1)!}{\Gamma\left(i+\nu\right)}\right\} =\frac{\Gamma\left(1+\nu\right)(i-1)!}{\Gamma\left(i+\nu\right)}
\end{align} 

\subsection{Auxiliary Result for the CDF Lower Bound in Theorem \ref{ThmSIR1tier}} \label{ProofThmSIR1tierLB}
For an arbitrary $m$, $I_{(m)}$ can be expressed as
\begin{align}
I_{(m)} & = \sum_{X_i \in \Phi \backslash\{X_1,\ldots,X_m\}} S_i |X_i|^{-\alpha}\nonumber \\
& = \sum_{X_i \in \Phi \backslash\{X_1,\ldots,X_m\}} S_i \left(|X_i|-|X_m|+|X_m|\right)^{-\alpha} \nonumber \\
&=\sum_{X_i \in \Phi \backslash\{X_1,\ldots,X_m\}} \left(1+\frac{|X_m|}{|X_i|-|X_m|}\right)^{-\alpha} S_i (|X_i|-|X_m|)^{-\alpha} \nonumber \\
&\geq \sum_{X_i \in \Phi \backslash\{X_1,\ldots,X_m\}} \left(1+\frac{|X_{m}|}{|X_{m+1}|-|X_m|}\right)^{-\alpha} S_i (|X_i|-|X_m|)^{-\alpha} \nonumber \\
&=\left(1+\frac{|X_{m}|}{|X_{m+1}|-|X_m|}\right)^{-\alpha} \sum_{X_i \in \Phi \backslash\{X_1,\ldots,X_m\}}  S_i (|X_i|-|X_m|)^{-\alpha}
\end{align}
Since the above inequality holds for any realization of $|X_{m}|$ and $|X_{m+1}|$, we have
\begin{align} \label{EquImLB}
I_{(m)} &\stackrel{a.s.}{\geq} \left(1+\frac{\mathbb{E}[|X_{m}|]}{\mathbb{E}[|X_{m+1}|]-\mathbb{E}[|X_m|]}\right)^{-\alpha} \sum_{X_i \in \Phi \backslash\{X_1,\ldots,X_m\}}  S_i (|X_i|-|X_m|)^{-\alpha} \nonumber\\
&\stackrel{(a)}{=} (1+2m)^{-\alpha} \sum_{X_i \in \Phi \backslash\{X_1,\ldots,X_m\}}  S_i (|X_i|-|X_m|)^{-\alpha},
\end{align}
where (a) follows because we have $\mathbb{E} [|X_m|]= (\lambda \pi)^{-0.5} \frac{\Gamma(m+0.5)}{(m-1)!}$.

Now we define another point process $\tilde{\Phi}_m\stackrel{\triangle}{=}\{Y_j\in \mathbb{R}^2: \mbox{ for any, } Y_{j}=X_{i}-\frac{X_i}{|X_i|}|X_m|, j=i-m\}$, i.e. $\tilde{\Phi}$ is formed by moving every point of $\{X_{m+1}, X_{m+2},\ldots\}$ in the PPP $\Phi$ toward the origin by distance $|X_m|$. Note that $\tilde{\Phi}_m$ is also a spatial Poisson Point Process, but non-homogeneous with a larger density than the original $\Phi$ (because when moving points toward the origin, we actually compress the space in $\mathbb{R}^2$). Therefore the sum interference from $\tilde{\Phi}_m$ is larger than that from the original $\Phi$, i.e. larger than $I_{(0)}$. 
\begin{align}
I_{(m)} &\stackrel{a.s. }{\geq} (1+2m)^{-\alpha} \sum\limits_{i=m+1}^{\infty}  S_i (|X_i|-|X_m|)^{-\alpha} \nonumber \\
&=(1+2m)^{-\alpha} \sum\limits_{Y_j \in \tilde{\Phi}_m}^{\infty} S_{j+m} |Y_j|^{-\alpha} \nonumber \\
&  \succeq (1+2m)^{-\alpha} I_{(0)}.
\end{align}

\bibliographystyle{IEEEtran}
\bibliography{DownlinkZFBF}

\end{document}